\def\tht{\theta}
\def\Om{\Omega}
\def\om{\omega}
\def\e{\varepsilon}
\def\g{\gamma}
\def\l{\lambda}
\def\p{\partial}
\def\D{\Delta}
\def\a{\alpha}
\def\b{\beta}
\def\d{\delta}
\def\z{\zeta}
\def\vp{\varphi}
\def\vr{\varrho}
\def\H{W_2}
\def\iu{\mathrm{i}}
\def\di{\,\mathrm{d}}
\def\Op{\mathcal{H}}
\def\x{\mathrm{x}}
\def\hf{\mathfrak{h}}
\def\la{\langle}
\def\ra{\rangle}
\def\H{W_2}
\def\cB{\mathcal{B}}
\def\cP{\mathcal{P}}
\def\Dom{\mathfrak{D}}
\DeclareMathOperator{\spec}{\sigma}
\DeclareMathOperator{\RE}{Re}
\DeclareMathOperator{\IM}{Im}
\DeclareMathOperator{\dist}{dist}
\DeclareMathOperator{\tr}{tr}
\numberwithin{equation}{section}
\newtheorem{theorem}{Theorem}[section]
\newtheorem{lemma}{Lemma}[section]
\begin{document}

\allowdisplaybreaks

\title{Approximation of point interactions by geometric perturbations in two-dimensional domains}

\date{\empty}

\author
{D.I. Borisov$^1$\footnote{Corresponding author}, P. Exner$^2$}

\vskip -0.5 true cm

\maketitle

\begin{center}
{\footnotesize $^1$
Institute of Mathematics, Ufa Federal Research Center, Russian Academy of Sciences, Ufa, Russia,
\\
Bashkir State Pedagogical
University named after M.~Akhmulla,
Ufa, Russia,
\\
University of Hradec Kr\'alov\'e,  Hradec Kr\'alov\'e, Czech Republic
 \\
{\tt borisovdi@yandex.ru}
\\
\footnotesize $^2$
Doppler Institute for Mathematical Physics and Applied Mathematics, Czech
Technical University in Prague, B\v{r}ehov\'{a} 7, 11519 Prague
\\
Nuclear Physics Institute, Czech Academy of Sciences, 25068 \v{R}e\v{z} near Prague, Czech Republic
\\[-.3em]
{\tt exner@ujf.cas.cz}
}
\end{center}

\begin{abstract}
We present a new type of approximation  of a second-order elliptic operator in a planar domain with a point interaction. It is of a geometric nature, the approximating family consists of operators with the same symbol and regular coefficients on the domain with a small hole. At the boundary of it Robin condition is imposed with the coefficient which depends on the linear size of a hole. We show that as the hole shrinks to a point and the parameter in the boundary condition is scaled in a suitable  way, nonlinear and singular, the indicated family converges in the norm-resolvent sense to the operator with the point interaction. This resolvent convergence is established with respect to several operator norms and order-sharp estimates of the convergence rates are provided.
\end{abstract}

\section{Introduction}

Operators with singular, point-like perturbations attracted attention in the early days of quantum mechanics as idealized models for Hamiltonians of systems in which the interaction is concentrated in a small area \cite{Fe36}. The advantage of such an idealized description is that one can simplify considerably spectral analysis of such operators. From the mathematical point of view, point interactions are easy to deal with in the case of ordinary differential operators where they are described by appropriate boundary conditions. In the practically important cases of dimensions two and three the question is more difficult, however, and it took time before Berezin and Faddeev \cite{BF61} showed how to describe these operators in terms of self-adjoint extensions. In the recent decades point interactions were a subject of an intense interest; we refer to the monograph \cite{AGHH} for the presentation of the theory and an extensive bibliography.

The key thing in application of the point interaction models is to understand how they can be approximated by operators with regular coefficients. This is again easy in case of the ordinary differential operators\footnote{This claim applies to the so-called $\delta$ potentials, there are more singular point interactions in one dimension for which the approximation is a far more complicated matter, see e.g. \cite{AN00, ENZ01}.} describing systems in one spatial dimension, where such an interaction is the limit of naturally scaled potentials. In dimensions two and three the procedure is much more subtle because scaling of the coefficients leads generically to a trivial result. One has to use a particular way of nonlinear scaling starting from the situation when the initial operator has a spectral singularity at the threshold of the continuous spectrum; a physicist would speak about a particular way of `coupling constant renormalization'.

With the importance of the point interaction models in mind, it would be certainly useful to have approximations other than the standard one mentioned above and described in \cite[Sec.I.1 and I.5]{AGHH}. The aim of the present paper is to present an alternative approximation to two-dimensional point interactions, which is of a geometric nature. It employs families of operators with the same differential expression as the unperturbed one but restricted to the exterior of a small hole containing the support of the point interaction; at the boundary of the hole we impose Robin boundary condition with a coefficient depending in a singular way on a parameter characterizing the linear size of the hole. Shrinking the hole and scaling properly the parameter in the boundary condition, we obtain an operator family that converges, in the norm-resolvent sense, to an operator with a point interaction in the domain without the hole. The convergence is established in terms of several operators norms and for  each of them, we obtain  order-sharp  estimates for the convergence rate. As a consequence, we also obtain the convergence of the operator spectra and the associated spectral projectors.

It should be noted that elliptic boundary value problems with small holes represent a classical example in the singular perturbation theory. Situations when the boundary of the hole is subject to one of the classical boundary conditions were investigated, for instance, in \cite{Il, MNP1984}, where typically a weak or strong resolvent convergence was established. Asymptotic expansions for solutions to such problems, in the first place, for the corresponding eigenvalues and eigenfunctions, were found under appropriate smoothness assumptions. Recent results on norm-resolvent convergence in the boundary homogenization theory \cite{PRSE, ZAMP, JDE} inspired results on the same convergence for operators in domains with small holes \cite{BM2018, IMM12}, however, in these  papers  a fixed classical boundary conditions was always imposed at the boundary of the hole, in particular, the Robin condition was used with the coefficient independent of the hole size.

As we have mentioned, we work in the two-dimensional setting. The way we present our result is particular and general at the same time. The particularity reflects the fact that we deal with approximation of a single point interaction, and moreover, that our result also has a limitation: our approximation applies only to point interactions  which are, roughly speaking, attractive enough in the sense made precise by the condition \eqref{2.4a} below;  recall that, for instance, a single point-interaction perturbation of the Laplacian in the plane is always attractive \cite[Sec.~I.5]{AGHH}. On the other hand, our proof is of a local nature and there would no problem to extend it to cover a finite number of point interactions; each of them will be approximated by an appropriate hole with Robin boundary and all estimates in Theorem~\ref{th:main} would remain true, even if the involved expression would be pretty cumbersome. It is also possible to consider operators with infinitely many point interactions provided the mutual distances between their supports have a positive lower bound, but then additional restrictions on the coefficients in the differential expression would be needed.

What is more important, in contrast to standard treatment as one can find in the monograph
 \cite{AGHH}, our analysis is more general in two respects. First of all, we consider point perturbations in an arbitrary planar domain and, if such a domain is not the entire plane, any (local) boundary conditions can be chosen for the free operator. In particular, since  the boundary operator $\cB$ can also describe quasi-periodic boundary conditions, our result covers, in the usual Floquet way, infinite periodic systems of point interactions with a single perturbation in the period cell without any additional assumptions.

Secondly, our unperturbed operator is not just a Laplacian or a Schr\"odinger operator, but a general second-order elliptic operator; in Sec~\ref{ss:SA0} we define a point perturbation of such an operator properly and show that it is self-adjoint. Our results thus  allows us to treat singular perturbations of more general systems such as magnetic Schr\"odinger operator or Hamiltonians with a weight in the kinetic term, in other words, systems with a position-dependent `mass'. This could be of interest in solid state physics, where the effective electron mass depends on the material and becomes nontrivial in composite structures build, say, from different types of semiconductors. In such models, the hole in the perturbed problem can be interpreted as a localized defect in the material with  a particular surface interaction at its boundary.

\section{Statement of the problem and the results}

Let $x=(x_1,x_2)$ be Cartesian coordinates in $\mathbb{R}^2$ and $\Om\subseteq\mathbb{R}^2$ be a domain which can be both bounded or unbounded, including the particular case of $\Om=\mathbb{R}^2$. If the boundary of $\Om$ is nonempty, we assume that it is $C^2$-smooth.

By $x_0$ we denote an arbitrary fixed point of $\Om$ and consider its neighborhood of which we will speak as of a \emph{hole}, defined as $\om_\e:=\{x:\, (x-x_0)\e^{-1}\in\om\}$, where $\om\subset\mathbb{R}^2$ is a bounded simply connected set the boundary of which is $C^3$-smooth. The hole is supposed to be small, its size being controlled by the positive parameter $\e$, and we assume that $\om$ contains the origin of the coordinates so that $x_0\in\om_\e$ for all $\e>0$.

The main object of our interest is the family of self-adjoint scalar second-order differential operators
\begin{equation}\label{2.1}
\Op_\e=-\sum\limits_{i,j=1}^{2} \frac{\p\ }{\p x_i} A_{ij} \frac{\p\ }{\p x_j} +  \iu\sum\limits_{j=1}^{2}\left( A_j \frac{\p\ }{\p x_j} + \frac{\p\ }{\p x_j} A_j\right) + A_0 \quad\;\text{in}\quad\; \Om_\e:= \Om\setminus\om_\e
\end{equation}
subject to one of the classical, $\e$-independent boundary conditions on $\p\Om$,
\begin{equation}
\label{2.2b}
\cB u=0\quad\text{on}\quad\p\Om,
\end{equation}
and to the Robin condition on $\p\om_\e$ that scales singularly with respect to $\e$ as follows,
\begin{align}
&\frac{\p u}{\p \mathrm{n}}=\frac{\a\big(\e^{-1}s_\e,\ln^{-1}\e\big)}{\e\ln\e}u \quad\text{on}\quad\p\om_\e,\label{2.2a}
\\[.5em]
& \a(s,\mu):=\a_0(s)+\mu \a_1(s).\label{2.2c}
\end{align}
The operator $\cB$ in \eqref{2.2b} can be arbitrary. For instance, $\cB u=u$ refers to Dirichlet condition and $\cB u=\frac{\p u}{\p\mathrm{n}}+b_0 u$ describes Robin condition with the parameter $b_0$. Another option is represented by quasi-periodic boundary conditions, and any combination of these conditions on different subsets of $\p\Om$ is also admissible.

The coefficients $A_{ij}=A_{ij}(x)$, $A_j=A_j(x)$, and $A_0=A_0(x)$ in \eqref{2.1} are real functions on the closure $\overline{\Om}$. We assume that $A_{ij}, A_j\in C^3(\overline{\Om})$, $A_0\in C^2(\overline{\Om})$, and the functions $A_{ij}$ satisfy the standard ellipticity condition
\begin{equation}\label{2.12}
A_{ij}=A_{ji},\quad\; \sum\limits_{i,j=1}^{2}A_{ij}(x)\xi_i\xi_j\geqslant c_0 (\xi_1^2+\xi_2^2),\quad\; \xi_i\in\mathbb{R}, \;\; x\in\overline{\Om},
\end{equation}
where $c_0$ is a fixed positive constant independent of $x$ and $\xi$. Furthermore, by $\frac{\p\ }{\p\mathrm{n}}$ we denote the conormal derivative,
\begin{equation}\label{2.13}
\frac{\p\ }{\p\mathrm{n}}:=\sum\limits_{i,j=1}^{2}A_{ij} \nu_i \frac{\p\ }{\p x_i}-\iu\sum\limits_{j=1}^{2} \nu_j A_j,
\end{equation}
where $\nu=(\nu_1,\nu_2)$ is the unit normal on $\p\om_\e$ pointing inside $\om_\e$, and $\,\iu\,$ is the imaginary unit. The symbols $\a_0=\a_0(s)$, $\a_1=\a_1(s)$ stand for real functions on $\p\om$ continuous with respect to the arc length $s\in[0,|\p\om|]$. Similarly $s_\e$ denotes the arc length of $\p\om_\e$ for which $s_\e=\e s$ naturally holds. If $\p\Om$ is empty, then condition (\ref{2.2b}) is simply omitted, and the same applies hereafter to all the conditions imposed on $\p\Om$.

The aim of the present paper is to investigate
the resolvent convergence of the operators $\Op_\e$ as the scaling parameter $\e$ tends to zero.

Before stating our main result, we need to introduce some more notations. By $\Op_\Om$ we denote the operator in $L_2(\Om)$ with the differential expression $\hat{\Op}$ given by the right hand side in (\ref{2.1}) and subject to boundary condition (\ref{2.2b}). Furthermore, it  follows from the definition of the hole $\om_\e$ that there exist positive constants $R_1$, $R_2$ independent of $\e$ such that
\begin{equation}\label{3.6}
\om_\e\subset B_{R_1\e}(x_0) \subset B_{2R_1\e}(x_0) \subset B_{R_2}(x_0) \subset B_{2R_2}(x_0) \subset \Om_0 \subset \Om,
\end{equation}
where $B_r(a)$ denotes conventionally the disc of radius $r$ centered at the point $a$ and $\Om_0$ is the set specified in the following paragraph.

We adopt the following assumptions on the coefficients $A_{ij}$, $A_j$, $A_0$ in \eqref{2.1}, on those specifying the operator $\cB$ in \eqref{2.2b}, and on the operator $\Op_\Om$. The latter is supposed to be self-adjoint in $L_2(\Om)$ and semibounded from below, the associated closed symmetric sesquilinear form being denoted by  $\hf_\Om$. The domain $\Dom(\hf_\Om)$ is a subspace in $\H^1(\Om)$, and moreover, there exists a domain $\Om_0\subset\Om$ containing $x_0$ such that the restriction of each function from the domain $\Dom(\Op_\Om)$ on $\Om_0$ belongs to $\H^2(\Om_0)$. The form $\hf_\Om$ satisfies the following lower bound
\begin{equation}\label{2.0a}
\hf_\Om[u]-\hf_{\Om_0}[u]+c_1\|u\|_{L_2(\Om\setminus\Om_0)}^2\geqslant c_2 \|u\|_{\H^1(\Om\setminus\Om_0)}^2
\end{equation}
for all $u\in\Dom(\hf_\Om)$, where $c_1,\,c_2$ are positive constants independent of $u$. More generally, for an arbitrary subdomain $\tilde{\Om}\subset\Om$ and vectors $u,v\in\Dom(\tilde{\Om})$  we denote
\begin{equation}\label{2.0b}
\begin{aligned}
\hf_{\tilde{\Om}}(u,v):=&
\sum\limits_{i,j=1}^{2} \left(A_{ij}\frac{\p u}{\p x_j}, \frac{\p v}{\p x_i}\right)_{L_2(\tilde{\Om})}
+ \iu \sum\limits_{j=1}^{2}  \left(\frac{\p u}{\p x_j},A_j v\right)_{L_2(\tilde{\Om})}
\\
&- \iu \sum\limits_{j=1}^{2}  \left(A_j u,\frac{\p v}{\p x_j}\right)_{L_2(\tilde{\Om})} + (A_0u,v)_{L_2(\tilde{\Om})}.
\end{aligned}
\end{equation}
If $\tilde{\Om}$ has a positive distance from $\p\Om$, this form satisfies the lower bound
\begin{equation}\label{2.0c}
\hf_{\tilde{\Om}}[u]+c_1\|u\|_{L_2(\tilde{\Om})}^2\geqslant c_2 \|u\|_{\H^1(\tilde{\Om})}^2
\end{equation}
with the same constants $c_1$, $c_2$ as in (\ref{2.0a}).

To define the operator $\Op_\e$ rigorously, we use an infinitely differentiable cut-off function $\chi_\Om$ taking values in $[0,1]$, equal to one in $B_{2R_2}(x_0)$, and vanishing outside $\Om_0$. Then $\Op_\e$ is the operator in   $L_2(\Om_\e)$ with the differential expression $\hat{\Op}$ and the domain $\Dom(\Op_\e)$ consisting of the functions $u$ 
satisfying condition (\ref{2.2a}) and such that
\begin{equation}\label{2.16}
(1-\chi_\Om)u\in \Dom(\Op_\Om), \quad\;
 \chi_\Om u\in\H^2(\Om_0\setminus\om_\e)\,;
\end{equation}
the action of $\Op_\e$ is then given by the formula
\begin{equation}\label{2.16a}
\Op_\e u:=\Op_\Om(1-\chi_\Om)u + \hat{\Op} \chi_\Om u.
\end{equation}

Next we have to specify the limit of the operator family $\{\Op_\e\}_{\e>0}$. Referring to Section~\ref{ss:aux} below, in Lemma~\ref{lm:G} we will establish the existence of a unique solution $G\in W_2^2(\Om\setminus B_\d(x_0))\cap C^1(\overline{B_\d}\setminus\{x_0\})$, $\d>0$, to the boundary-value problem
\begin{equation}\label{2.6}
(\hat{\Op}+c_1)G=0\quad\text{in}\quad\Om\setminus\{x_0\},\quad\; \cB G=0\;\;\text{on}\;\; \p\Om,
\end{equation}
where $c_1$ is the constant from \eqref{2.0a} and \eqref{2.0c}, that behaves in the vicinity of $x_0$ as follows,
\begin{equation}\label{2.7}
G(x)=\ln|\mathrm{A}^{-\frac{1}{2}} (x-x_0)|+a+O\big(|x-x_0|\ln|x-x_0|\big),\qquad x\to x_0,
\end{equation}
with $a\in\mathbb{R}$ being a fixed number, $\mathrm{E}$ is the unit $2\times 2$ matrix and
\begin{equation*}
\mathrm{A}:=
\begin{pmatrix}
A_{11}(x_0) & A_{12}(x_0)
\\
A_{21}(x_0) & A_{22}(x_0)
\end{pmatrix}.
\end{equation*}
By $\x=\x(s)$ we denote the vector equation of the boundary, that is, the curve $\x:[0,|\p\om|]\to\Om$ coincides
with $\p\om$. We put
\begin{equation}\label{2.4}
  \a_0(s)=\frac{\nu\cdot\mathrm{A}^{\frac{1}{2}}\x(s)} {|\mathrm{A}^{-\frac{1}{2}}\x(s)|^2},
\end{equation}
suppose that $\a_1$ is such that
\begin{equation}
  K:=-\int\limits_{\p\om} \big(\a_0(s) \ln|\mathrm{A}^{-\frac{1}{2}}\x(s)|
+\a_1(s)\big)\mathrm{d}s>-c_2\|G\|_{L_2(\Om)}^2-\pi a\tr\mathrm{A}
\label{2.4a}
\end{equation}
holds, and denote
\begin{equation}\label{2.5}
\b:=-\frac{K}{\pi\tr\mathrm{A}},
\end{equation}
assuming in addition that $\b\ne a$.

The limiting operator of the family $\{\Op_\e\}_{\e>0}$ turns out to be the operator with the differential expression $\hat{\Op}$ in $\Om$ and a point interaction at the point $x_0$. We denote it $\Op_{0,\b}$; it is an operator in $L_2(\Om)$ with the domain
\begin{equation}\label{2.15}
\Dom(\Op_{0,\b}):=\left\{u=u(x):\, u(x)=v(x)+(\b-a)^{-1} v(x_0) G(x),\ v\in \Dom(\Op_\Om)\right\}
\end{equation}
acting as
\begin{equation}\label{2.10}
\Op_{0,\b} u=\Op_\Om v - c_1(\b-a)^{-1} v(x_0) G,
\end{equation}
where $c_1$ is again the constant from \eqref{2.0a} and \eqref{2.0c}

By $\|\cdot\|_{X\to Y}$ we denote the norm of a bounded operator acting from a Hilbert space $X$ into a Hilbert space $Y$. Now we are in position to state our main result:

\begin{theorem}\label{th:main}
The operators $\Op_\e$ and $\Op_{0,\b}$ are self-adjoint and $\Op_\e$ converges to $\Op_{0,\b}$ in the norm resolvent sense as $\e\to+0$. Namely, the following estimates hold,
\begin{align}\label{cnv1}
&\|(\Op_\e-\l)^{-1}-(\Op_{0,\b}-\l)^{-1}\|_{L_2(\Om)\to  L_2(\Om_\e)} \leqslant C|\ln\e|^{-1},
\\[.3em]
&\big\|\nabla\big((\Op_\e-\l)^{-1}-(\Op_{0,\b}-\l)^{-1}\big) \big\|_{L_2(\Om)\to  L_2(\Om_\e)}\leqslant C|\ln\e|^{-\frac{1}{2}},\label{cnv2}
\\[.3em]
&\big\|\chi_{\tilde{\Om}}\big((\Op_\e-\l)^{-1}-(\Op_{0,\b}-\l)^{-1}\big)\big\|_{L_2(\Om)\to  \Dom(\hf_\Om)}\leqslant C|\ln\e|^{-1},\label{cnv3}
\end{align}
where $\tilde{\Om}$ is an arbitrary fixed subdomain of $\Om$ such that $x_0\notin\tilde{\Om}$ and $\chi_{\tilde{\Om}}$ is an infinitely differentiable cut-off function equal to one on $\tilde{\Om}$ and vanishing outside some fixed domain containing $\tilde{\Om}$, still separated from the point $x_0$ by a positive distance. These estimates are order-sharp; the positive constants $C$ are independent of $\e$, the constant in estimate (\ref{cnv3})  may in general depend on the choice of $\tilde{\Om}$.
\end{theorem}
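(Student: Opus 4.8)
The plan is to establish \eqref{cnv1}--\eqref{cnv3} by producing, for each $f\in L_2(\Om)$, an explicit near-solution $\tilde u_\e\in\Dom(\Op_\e)$ of $(\Op_\e-\l)\tilde u_\e=f$ that is $O(|\ln\e|^{-1})$-close to $(\Op_{0,\b}-\l)^{-1}f$, and then to convert this into the norm-resolvent estimates by a Neumann-series argument. Self-adjointness of $\Op_{0,\b}$ and, for each fixed $\e$, of $\Op_\e$ comes from the form representations attached to \eqref{2.15}--\eqref{2.10} and to \eqref{2.16}--\eqref{2.16a} together with standard trace theory for the Robin term (see also Section~\ref{ss:SA0}). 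Fix a real $\l$ below the spectra of $\Op_\Om$ and $\Op_{0,\b}$ (and, for small $\e$, of $\Op_\e$, which will follow). Writing $u_0:=(\Op_{0,\b}-\l)^{-1}f=v+(\b-a)^{-1}v(x_0)G$ with $v\in\Dom(\Op_\Om)$, the standing assumptions give $v|_{\Om_0}\in\H^2(\Om_0)\hookrightarrow C(\overline{\Om_0})$ and, via the resolvent formula for $\Op_{0,\b}$, $\|v\|_{\H^2(\Om_0)}+|v(x_0)|\leqslant C\|f\|_{L_2(\Om)}$; by \eqref{2.7} and the identity $v(x_0)+(\b-a)^{-1}v(x_0)a=(\b-a)^{-1}v(x_0)\b$ one has, with $c:=(\b-a)^{-1}v(x_0)$,
\begin{equation*}
u_0(x)=c\bigl(\ln|\mathrm{A}^{-\frac12}(x-x_0)|+\b\bigr)+\bigl(v(x)-v(x_0)\bigr)+c\,O\bigl(|x-x_0|\ln|x-x_0|\bigr)\qquad\text{as }x\to x_0 .
\end{equation*}

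I would build $\tilde u_\e$ by matched asymptotic expansions. In the outer region I keep a function of the form $v_\e+c_\e G$ with $v_\e\in\Dom(\Op_\Om)$ and a scalar $c_\e$; in $B_{2R_2}(x_0)\setminus\om_\e$ I glue it, across the transition zone $R_1\e<|x-x_0|<R_2$, to a boundary-layer term $\psi_\e$ written in the stretched variable $\xi=(x-x_0)/\e$ and assembled from the solutions of the auxiliary exterior problems with frozen coefficients $\mathrm{A}=\mathrm{A}(x_0)$ and the rescaled Robin condition $\mathrm{A}\nu\cdot\nabla_\xi(\cdot)=\ln^{-1}\e\,\a(s,\ln^{-1}\e)(\cdot)$ on $\p\om$ supplied by Section~\ref{ss:aux}. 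The choice \eqref{2.4} of $\a_0$ is precisely what cancels the leading-order Robin mismatch of the logarithmic profile $\ln|\mathrm{A}^{-\frac12}\xi|$, so that only an $O(|\ln\e|^{-1})$ mismatch remains and the corrector $\psi_\e$ can be taken small and localized at scale $\e$; the solvability (flux) condition for such a localized $\psi_\e$, evaluated with the help of \eqref{2.4}, \eqref{2.4a} and \eqref{2.5}, forces exactly the relation $c_\e=(\b-a)^{-1}v_\e(x_0)$ that also defines $\Dom(\Op_{0,\b})$ in \eqref{2.15}, so that the outer part is $(\Op_{0,\b}-\l)^{-1}f$ up to a correction of size $O(|\ln\e|^{-1}\|f\|)$; the hypothesis $\b\neq a$ keeps this relation, and \eqref{2.15}, \eqref{2.10}, non-degenerate. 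One then checks, using that the frozen principal part annihilates the inner profiles, that $(\hat\Op-\l)\psi_\e$ is $O(|\ln\e|^{-1})$ in $L_2$ (the surviving contributions come from $\mathrm{A}(x)-\mathrm{A}(x_0)$, the lower-order coefficients, the sub-leading boundary coefficient $\ln^{-2}\e\,\a_1$, and the gluing commutators), so that $\|(\Op_\e-\l)\tilde u_\e-f\|_{L_2(\Om_\e)}\leqslant C|\ln\e|^{-1}\|f\|$ while $\|\tilde u_\e-u_0\|_{L_2(\Om_\e)}\leqslant C|\ln\e|^{-1}\|f\|$ and $f\mapsto\tilde u_\e$ is bounded uniformly in $\e$.

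Given $\tilde u_\e$, the operator $R_\e f:=\tilde u_\e$ satisfies $\|R_\e\|\leqslant C$ and $(\Op_\e-\l)R_\e=\mathrm{I}+S_\e$ with $\|S_\e\|\leqslant C|\ln\e|^{-1}<1$ for small $\e$; hence $\Op_\e-\l$ is surjective and, being self-adjoint with $\l$ real, invertible with $\|(\Op_\e-\l)^{-1}\|\leqslant C$ uniformly and $(\Op_\e-\l)^{-1}=R_\e(\mathrm{I}+S_\e)^{-1}$. Then $\|(\Op_\e-\l)^{-1}f-\tilde u_\e\|_{L_2(\Om_\e)}=\|(\Op_\e-\l)^{-1}\bigl((\Op_\e-\l)\tilde u_\e-f\bigr)\|_{L_2(\Om_\e)}\leqslant C|\ln\e|^{-1}\|f\|$, which together with $\|\tilde u_\e-u_0\|_{L_2(\Om_\e)}\leqslant C|\ln\e|^{-1}\|f\|$ gives \eqref{cnv1}. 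For \eqref{cnv2} I decompose the gradient of $(\Op_\e-\l)^{-1}f-u_0$ as $\nabla\bigl((\Op_\e-\l)^{-1}f-\tilde u_\e\bigr)+\nabla(\tilde u_\e-u_0)$: away from $x_0$ interior elliptic estimates bound everything by $C|\ln\e|^{-1}\|f\|$, while near the hole the logarithmic singularities of $(\Op_\e-\l)^{-1}f$ and $u_0$ (both with leading amplitude $c$) cancel to leading order, and the remainder is essentially a logarithmic/boundary-layer term of amplitude $O(|\ln\e|^{-1})$, whose gradient has $L_2$-norm of order $|\ln\e|^{-1}\cdot|\ln\e|^{1/2}=|\ln\e|^{-1/2}$ over $B_{R_2}(x_0)\setminus\om_\e$ — hence the exponent $-\tfrac12$, and the same computation shows it is attained. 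For \eqref{cnv3}, on $\tilde\Om$ (which is separated from $x_0$) the difference $\tilde u_\e-u_0$ is smooth and $O(|\ln\e|^{-1})$ in $\Dom(\hf_\Om)$, and $\chi_{\tilde\Om}\bigl((\Op_\e-\l)^{-1}f-\tilde u_\e\bigr)$ is controlled in the same norm by elliptic regularity on a fixed neighbourhood of $\tilde\Om$; order-sharpness in all three cases is read off from a single fixed $f$ with $v(x_0)\neq0$, for which the leading remainder is genuinely of the stated order.

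The main obstacle is the construction and the sharp error analysis of the boundary layer: one must set up the auxiliary exterior problems of Section~\ref{ss:aux} in appropriate weighted spaces, prove their unique solvability precisely under \eqref{2.4a}, extract the exact behaviour at infinity — in particular the occurrence of the constant $\b$ from \eqref{2.5} and the decay rate $O(|\xi|^{-1})$ — and then keep track of the three competing scales (the macroscopic scale $1$, the mesoscopic transition zone, and the microscopic scale $\e$ of the hole) so that every residual is genuinely $O(|\ln\e|^{-1})$ in $L_2(\Om_\e)$ rather than only $O(|\ln\e|^{-1/2})$. A related delicate point is the elliptic estimate near the shrinking Robin boundary $\p\om_\e$, where the trace inequality has to be applied with the correct logarithmic gain so that the singular factor $(\e\ln\e)^{-1}$ in \eqref{2.2a} does not spoil the bounds.
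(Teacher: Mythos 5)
Your plan takes a genuinely different route from the paper. You propose matched asymptotics: build an explicit near-inverse $\tilde u_\e\in\Dom(\Op_\e)$ by gluing an outer piece $v_\e+c_\e G$ to a boundary layer in the stretched variable $\xi=(x-x_0)/\e$, then recover $(\Op_\e-\l)^{-1}$ by a Neumann-series argument around this near-inverse. The paper never constructs an approximate solution at all. It sets $v_\e:=u_\e-u_0$ with $u_\e,u_0$ the \emph{exact} resolvent images, observes that $v_\e$ solves the homogeneous equation in $\Om_\e$ with an inhomogeneous Robin datum $g_\e=\bigl(\tfrac{\p}{\p\mathrm{n}}-\e^{-1}\a\bigr)u_0$ on $\p\om_\e$ (the ``boundary defect'' of the limiting solution at the hole), and reads off all three bounds from the single energy identity $\hf_\e[v_\e]-\l\|v_\e\|_{L_2(\Om_\e)}^2=-(g_\e,v_\e)_{L_2(\p\om_\e)}$ combined with a uniform coercivity estimate for $\hf_\e+c_1$, proved beforehand via the decomposition $u=u^\bot+\la u\ra_G G$ relative to the singular profile $G$. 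The technical work then lives entirely in the trace and averaging Lemmata~\ref{lm:bnd}, \ref{lm:bnd-mean}, \ref{lm4.0} and in the two boundary identities for $G$ in Lemma~\ref{lm:asG}, used to estimate $(g_\e,v_\e)_{L_2(\p\om_\e)}$. This buys two things: no exterior boundary-layer problem has to be posed, analyzed in weighted spaces, or matched; and uniform invertibility of $\Op_\e-\l$ comes for free from the coercivity bound \eqref{5.28}, not from a perturbative Neumann series. What your route would buy in exchange is an explicit asymptotic description of $u_\e$, which the paper's argument deliberately avoids.

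Two concrete reservations about your plan as written. First, you assert that the auxiliary exterior problems with frozen coefficients and rescaled Robin data are ``supplied by Section~\ref{ss:aux}''; they are not. That section supplies the singular profile $G$ (Lemma~\ref{lm:G}), trace estimates on $\p\om_\e$ (Lemmata~\ref{lm:bnd}, \ref{lm:bnd-mean}), and an averaging estimate (Lemma~\ref{lm4.0}). The Neumann problems \eqref{3.13} and \eqref{3.21} appearing there are technical devices inside those trace estimates, not inner-layer models. The exterior Robin problem in the $\xi$-variable on $\mathbb{R}^2\setminus\om$ with boundary coefficient of strength $\ln^{-1}\e$, together with its decay rate at infinity and the way the constant $\b$ from \eqref{2.5} emerges from it, would have to be set up and analyzed from scratch; this is precisely the ``main obstacle'' you flag at the end, and it is the bulk of the missing work. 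Second, the Neumann-series step silently requires $\tilde u_\e\in\Dom(\Op_\e)$, i.e.\ the glued function must satisfy the Robin condition \eqref{2.2a} \emph{exactly} (not merely to leading order); this is achievable by constructing the layer as an exact solution of the exterior Robin problem, but must be stated and enforced in the gluing, otherwise $(\Op_\e-\l)\tilde u_\e$ is not even defined. With those reservations, the plan is coherent and would plausibly yield the same order-sharp estimates, but as written it is a program rather than a proof; the paper's route is shorter precisely because it sidesteps both issues.
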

Our second main results describes the spectral convergence  of  the  operators $\Op_\e$; the spectrum of an operator is denoted by $\spec(\cdot)$.

\begin{theorem}\label{th2.2}
The spectrum of the operator $\Op_\e$ converges to  that  of $\Op_{0,\b}$  as $\e\to+0$. More specifically, if $\l\notin\spec(\Op_{0,\b})$, then $\l\notin\spec(\Op_\e)$ provided $\e$ is small enough, while if $\l\in\spec(\Op_{0,\b})$, then there exists a point $\l_\e\in\spec(\Op_\e)$ such that $\l_\e\to\l$ as $\e\to+0$. For any  $\vr_1, \vr_2\notin\spec(\Op_{0,\b})$, $\vr_1<\vr_2$, the spectral projection of $\Op_\e$ corresponding to the segment $[\vr_1,\vr_2]$ converges to the spectral projection of $\Op_{0,\b}$ referring  to the same segment in the sense of the norm $\|\cdot\|_{L_2(\Om)\to L_2(\Om_\e)}$.

For each fixed segment $Q:=[\vr_1,\vr_2]$  of  the real line the inclusion
\begin{equation}\label{2.17}
\spec(\Op_\e)\cap Q\subset
\big\{\l\in Q:\,\dist(\l,\spec(\Op_{0,\b})\cap Q)\leqslant C|\ln\e|^{-1} \}
\end{equation}
holds, where $C$ is a fixed constant independent of $\e$ but depending of $Q$. If $\l_0$ is an isolated  eigenvalue of $\Op_{0,\b}$ of a multiplicity $n$, there exist exactly $n$ eigenvalues of the operator $\Op_\e$, counting multiplicities, which converge to $\l_0$ as $\e\to+0$. The total projection $\cP_\e$ referring to these perturbed eigenvalues and the projection  $\cP_{0,\b}$ onto the eigenspace associated with $\l_0$ satisfy   estimates analogous to (\ref{cnv1}), (\ref{cnv2}), and (\ref{cnv3}).
\end{theorem}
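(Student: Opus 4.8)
The plan is to derive Theorem~\ref{th2.2} as a consequence of the norm-resolvent convergence already asserted in Theorem~\ref{th:main}, using standard perturbation-theoretic machinery but paying attention to the fact that $\Op_\e$ and $\Op_{0,\b}$ act in different Hilbert spaces, $L_2(\Om_\e)$ and $L_2(\Om)$. First I would fix a convenient $\l$ in the common resolvent set (for instance a point far to the left of both spectra, which is possible since both operators are semibounded) and work with the resolvents $(\Op_\e-\l)^{-1}$ and $(\Op_{0,\b}-\l)^{-1}$. The natural way to compare objects in the two spaces is to use the extension-by-zero operator $J_\e:L_2(\Om_\e)\hookrightarrow L_2(\Om)$ and its adjoint, the restriction $J_\e^*:L_2(\Om)\to L_2(\Om_\e)$; since $|\om_\e|=O(\e^2)\to0$, these maps are ``almost unitary'' and the difference $\|(\Op_\e-\l)^{-1}J_\e^* - (\Op_{0,\b}-\l)^{-1}\|_{L_2(\Om)\to L_2(\Om)}$ is bounded by $C|\ln\e|^{-1}$ by \eqref{cnv1} together with the elementary estimate $\|u\|_{L_2(\om_\e)}=O(\e)\|u\|_{\H^1}$ for functions in the range of the limiting resolvent. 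Thus in the sense of the generalized (two-space) norm-resolvent convergence, $\Op_\e\to\Op_{0,\b}$ with rate $|\ln\e|^{-1}$.

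From here the spectral statements follow by the usual arguments. For the inclusion \eqref{2.17}: given a compact segment $Q$, for $\l\in Q$ with $\dist(\l,\spec(\Op_{0,\b}))=:\rho$ one has $\|(\Op_{0,\b}-\l)^{-1}\|=\rho^{-1}$, and if $\rho>C'|\ln\e|^{-1}$ with $C'$ large the Neumann series shows $(\Op_\e-\l)^{-1}$ exists and is bounded, so $\l\notin\spec(\Op_\e)$; uniformity in $\l\in Q$ is automatic because the resolvent norm of $\Op_{0,\b}$ is controlled by $\dist(\l,\spec(\Op_{0,\b})\cap Q)^{-1}$ up to a $Q$-dependent constant (on a compact segment only the part of the spectrum inside, or within bounded distance of, $Q$ matters). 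The converse ``no spectrum is lost'' direction — that every $\l\in\spec(\Op_{0,\b})$ is approached by $\l_\e\in\spec(\Op_\e)$ — follows from the standard fact that norm-resolvent convergence forces convergence of spectra in the sense that $\spec(\Op_{0,\b})\subset\liminf\spec(\Op_\e)$; concretely, if no spectrum of $\Op_\e$ were near $\l$ then $\|(\Op_\e-\l)^{-1}\|$ would stay bounded while $\|(\Op_{0,\b}-\l)^{-1}J_\e^* - (\Op_\e-\l)^{-1}\|\to0$, contradicting $\|(\Op_{0,\b}-\l)^{-1}\|=\infty$.

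For the spectral projections associated with a fixed gap-surrounded segment $[\vr_1,\vr_2]$ one uses the Riesz formula $\cP=\frac{1}{2\pi\iu}\oint_\G (z-\Op)^{-1}\di z$ with $\G$ a fixed contour in $\mathbb{C}$ encircling $[\vr_1,\vr_2]$ and separated from both spectra by a positive distance (possible for small $\e$ by what was just shown). Subtracting the two contour integrals and inserting $J_\e^*$ appropriately, the integrand is bounded on $\G$ uniformly by $C|\ln\e|^{-1}$, whence $\|\cP_\e - \cP_{0,\b}\|$ (understood via $J_\e^*$, i.e.\ in the $L_2(\Om)\to L_2(\Om_\e)$ sense) is $O(|\ln\e|^{-1})$. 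For an isolated eigenvalue $\l_0$ of multiplicity $n$, taking $\G$ a small circle around $\l_0$ only, the same computation gives $\|\cP_\e-\cP_{0,\b}\|\to0$, hence $\operatorname{rank}\cP_\e=\operatorname{rank}\cP_{0,\b}=n$ for small $\e$, which is precisely the assertion that exactly $n$ eigenvalues of $\Op_\e$, with multiplicity, converge to $\l_0$; the estimates analogous to \eqref{cnv2} and \eqref{cnv3} for $\cP_\e-\cP_{0,\b}$ are obtained by the same contour-integral representation but now estimating the integrand in the gradient norm and in the form-domain norm via \eqref{cnv2}, \eqref{cnv3} respectively, the cut-off $\chi_{\tilde\Om}$ commuting harmlessly past the contour integral.

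The main obstacle is not conceptual but bookkeeping: one must be careful throughout that the two operators live on different spaces, so every ``difference of resolvents'' is really $(\Op_\e-\l)^{-1}J_\e^* - (\Op_{0,\b}-\l)^{-1}$ or its transpose, and one must check that the small-volume corrections coming from $J_\e J_\e^* \ne \mathrm{I}$ on $L_2(\Om)$ (they differ by the projection onto $L_2(\om_\e)$, of norm tending to $0$ in the relevant operator topologies, again because functions in the ranges of the resolvents are $\H^1$ near $x_0$ and $|\om_\e|\to 0$) do not spoil any of the estimates. Once this is set up cleanly, every step above is a routine application of the contour-integral representation of spectral projectors together with the three convergence bounds of Theorem~\ref{th:main}, and the rates $|\ln\e|^{-1}$ in \eqref{2.17} and in the projector estimates are inherited verbatim from \eqref{cnv1}--\eqref{cnv3}.
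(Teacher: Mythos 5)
Your proposal follows the same overall route as the paper's proof: upgrade the three resolvent estimates of Theorem~\ref{th:main} to a norm-resolvent estimate on $L_2(\Om)$, invoke a standard spectral-convergence theorem (Reed--Simon VIII.23) for the qualitative statements and the convergence of the spectral projections on a gap-surrounded segment, obtain the inclusion~\eqref{2.17} from the explicit formula $\|(H-\l)^{-1}\|=\dist(\l,\spec H)^{-1}$ for self-adjoint $H$, and derive the projector estimates and the multiplicity statement from the Riesz contour integral. The one genuine difference lies in how the two-space obstacle is handled, and here the paper's device is both simpler and more watertight than yours. You compare $J_\e(\Op_\e-\l)^{-1}J_\e^*$ with $(\Op_{0,\b}-\l)^{-1}$ as operators on $L_2(\Om)$, but $J_\e(\Op_\e-\l)^{-1}J_\e^*$ is \emph{not} the resolvent of any self-adjoint operator on $L_2(\Om)$ (it annihilates $L_2(\om_\e)$), so the vanilla Reed--Simon theorem is not directly applicable; you would instead need to invoke, and in effect prove, a two-space version of the spectral convergence theorem. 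The paper sidesteps this entirely by padding $\Op_\e$ with the auxiliary multiplication operator $\Op_{\om_\e}u:=\e^{-1}u$ on $L_2(\om_\e)$: the direct sum $\tilde\Op_\e:=\Op_\e\oplus\Op_{\om_\e}$ is a genuine self-adjoint operator on $L_2(\Om)$, its resolvent is $(\Op_\e-\l)^{-1}\oplus\frac{\e}{1-\e\l}$ and differs from $(\Op_{0,\b}-\l)^{-1}$ by $O(|\ln\e|^{-1})$ in $L_2(\Om)\to L_2(\Om)$ norm, and $\spec(\tilde\Op_\e)=\spec(\Op_\e)\cup\{\e^{-1}\}$ with the extra point escaping to infinity. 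Everything then follows from the one-space theorem with no extra machinery. Your plan can certainly be made to work by adding precisely this observation (or, equivalently, by citing a two-space convergence theorem in the sense of Stummel/Vainikko/Weidmann), but as written the gap is there.

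Two smaller points. First, the estimate you quote, $\|u\|_{L_2(\om_\e)}=O(\e)\|u\|_{\H^1}$, is not quite right for $u\in\Dom(\Op_{0,\b})$, which contains the $\H^1$ but unbounded function $G\sim\ln|x-x_0|$; the paper's Lemma~\ref{lm:bnd}, estimate~\eqref{3.3a}, gives the correct order $\|u\|_{L_2(\om_\e)}\leqslant C\e|\ln\e|^{1/2}\|u\|_{\H^1(B_{2R_2}(x_0))}$, which is still $o(|\ln\e|^{-1})$ so the conclusion is unaffected. Second, for the inclusion~\eqref{2.17} you appeal to a Neumann series on the real axis; this requires the constant in~\eqref{cnv1} to be uniform on $Q\setminus\spec(\Op_{0,\b})$, which is not automatic since Lemma~\ref{lm4.1}'s constant degenerates as $\l$ approaches the spectrum. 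The paper avoids this by shifting a fixed amount $i|\ln\e|^{-1}$ off the real axis and using the exact formula for the resolvent norm, then reading off the distance estimate; your argument would need a similar calibration of $\rho$ versus the $\l$-dependence of the constants to close.
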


Before proceeding to  the theorems, let us add a few comments. The convergence of  $\Op_\e$ to $\Op_{0,\b}$  is expressed in terms of several norms for the corresponding difference of the resolvents,  namely those of operators acting from $L_2(\Om)$ into $L_2(\Om_\e)$ or $\H^1(\Om_\e)$, see (\ref{cnv1}), (\ref{cnv2}). One more estimate is given in (\ref{cnv3}), where the norm involves a cut-off function $\chi_{\tilde{\Om}}$. The presence of this cut-off function means that the difference of the resolvents is considered on a fixed subdomain separated from the point $x_0$; this difference is estimated in the norm defined by the form of the operator $\Op_\Om$. The convergence rates in (\ref{cnv1}), (\ref{cnv3}) are same being $O(|\ln\e|^{-1})$, while the rate in (\ref{cnv2}) is just $O(|\ln\e|^{-\frac{1}{2}})$. The reason is that the norm in (\ref{cnv2}) is stronger  than in (\ref{cnv1}) since it involves the gradient; note that in (\ref{cnv3}) its presence plays no role, because the norm is considered on the domain separated from the point $x_0$.

As indicated in the introduction, the  constant $\b$ defined by (\ref{2.5}) can not take all values on the real line  in view of (\ref{2.4a}).  This condition obviously fixes an upper bound for the admissible values of $\b$ and, at the same time, it is  essential for our technique; should (\ref{2.5}) fail, the convergence of our operator families could fail as well.

Our second result, Theorem~\ref{th2.2}, states the convergence of the spectrum and the associated spectral projections. This result is based  essentially  on standard theorems about  the convergence of the spectra  with respect to the resolvent norm, however, they can not be applied directly here since the operators $\Op_\e$ and $\Op_{0,b}$ act on different spaces. One more problem is that the functions in the domain of the limiting operator exhibit a logarithmic singularity at $x_0$. Nevertheless, we succeed to  overcome these obstacles. Moreover, inclusion (\ref{2.17}) provides, in fact, an estimate for the convergence rate of the spectrum, which turns out to be the same as in inequality (\ref{cnv1}). Indeed, this inclusion means that once we consider compact parts of the spectra of $\Op_{0,\b}$ and $\Op_\e$, the distance between the perturbed spectrum and the limiting one is of order $O(|\ln\e|^{-1})$. Considering then how the isolated eigenvalues of the operator $\Op_{0,\b}$ bifurcate into the eigenvalues of $\Op_\e$, we are able also to estimate the convergence rate for the associated spectral projections arriving at  estimates  that  are the same as (\ref{cnv1}), (\ref{cnv2}), (\ref{cnv3}).


\section{Auxiliary results}\label{ss:aux}

Here we collect several auxiliary results, which will help us to prove Theorem~\ref{th:main} in the next section.

\begin{lemma}\label{lm3.0}
The identity
\begin{equation}
\label{5.5}
\int\limits_{\p\om} \a_0(s)\,\mathrm{d}s =
\int\limits_{\p\om} \frac{\nu\cdot\mathrm{A}^\frac{1}{2}\x(s)} {|\mathrm{A}^{-\frac{1}{2}}\x(s)|^2}\,\mathrm{d}s=-\pi\tr\mathrm{A}
\end{equation}
holds true.
\end{lemma}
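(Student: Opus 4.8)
The plan is to recognize the boundary integral as the flux of a suitable vector field through $\p\om$ and to evaluate it by the divergence theorem. Concretely, I would introduce the vector field $V(y):=\dfrac{\mathrm{A}^{\frac12}y}{|\mathrm{A}^{-\frac12}y|^{2}}$ on $\mathbb{R}^2\setminus\{0\}$, so that the integrand in \eqref{5.5} is exactly $\nu\cdot V(\x(s))$ with $\nu$ the inward unit normal on $\p\om$. Since $\om$ is bounded, simply connected and contains the origin, the divergence theorem applied on $\om\setminus B_\r(0)$ for small $\r>0$ gives
\[
\int\limits_{\p\om}\nu\cdot V\,\mathrm{d}s
=-\int\limits_{\om\setminus B_\r(0)}\operatorname{div}V\,\mathrm{d}y
+\int\limits_{\p B_\r(0)}\nu_\r\cdot V\,\mathrm{d}s,
\]
where the sign reflects that $\nu$ points into $\om$ and $\nu_\r$ is the inward normal on $\p B_\r(0)$ with respect to the annulus, i.e.\ the outward normal of the small disc.

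The next step is the computation $\operatorname{div}V=0$ on $\mathbb{R}^2\setminus\{0\}$. Writing $|\mathrm{A}^{-\frac12}y|^{2}=\mathrm{A}^{-1}y\cdot y$ and $V_i=(\mathrm{A}^{\frac12}y)_i/(\mathrm{A}^{-1}y\cdot y)$, a direct differentiation yields $\p_{y_i}(\mathrm{A}^{\frac12}y)_i=\operatorname{tr}\mathrm{A}^{\frac12}$ and $\p_{y_i}(\mathrm{A}^{-1}y\cdot y)=2(\mathrm{A}^{-1}y)_i$, so that
\[
\operatorname{div}V=\frac{\operatorname{tr}\mathrm{A}^{\frac12}}{\mathrm{A}^{-1}y\cdot y}
-\frac{2\,(\mathrm{A}^{\frac12}y)\cdot(\mathrm{A}^{-1}y)}{(\mathrm{A}^{-1}y\cdot y)^{2}}
=\frac{\operatorname{tr}\mathrm{A}^{\frac12}}{\mathrm{A}^{-1}y\cdot y}
-\frac{2\,\mathrm{A}^{-\frac12}y\cdot\mathrm{A}^{-\frac12}y}{(\mathrm{A}^{-1}y\cdot y)^{2}}\,\operatorname{tr}\mathrm{A}^{\frac12},
\]
hmm — this naive attempt is \emph{not} identically zero, so the honest route is instead to pass to the substitution $y=\mathrm{A}^{\frac12}z$, under which $V$ pulls back (up to the Jacobian $\det\mathrm{A}^{\frac12}$) to the planar field $z/|z|^{2}$, whose divergence is $2\pi$ times the Dirac mass at the origin. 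Thus $\operatorname{div}V$ vanishes away from $0$ and carries a concentrated contribution; equivalently, the annulus integral is $\r$-independent, and one may as well shrink the inner circle, replacing $\p\om$ by $\p B_\r(0)$ with $\r\to 0$. On $\p B_\r(0)$ one parametrizes $y=\r(\cos\tht,\sin\tht)$, the inward normal of the annulus is $\nu=-(\cos\tht,\sin\tht)$, and a short computation gives $\nu\cdot V=-\dfrac{\mathrm{A}^{\frac12}y\cdot y}{\r\,|\mathrm{A}^{-\frac12}y|^{2}}\cdot\dfrac{1}{\r}\cdot\r = -\dfrac{\mathrm{A}^{\frac12}\omega\cdot\omega}{\mathrm{A}^{-1}\omega\cdot\omega}$ with $\omega=(\cos\tht,\sin\tht)$, after which
\[
\int\limits_{\p\om}\nu\cdot V\,\mathrm{d}s
=-\int\limits_0^{2\pi}\frac{\mathrm{A}^{\frac12}\omega\cdot\omega}{\mathrm{A}^{-1}\omega\cdot\omega}\,\mathrm{d}\tht .
\]
The remaining, purely algebraic task is to verify that this last integral equals $\pi\operatorname{tr}\mathrm{A}$; diagonalizing $\mathrm{A}=\operatorname{diag}(\l_1,\l_2)$ by an orthogonal change of the angular variable, it becomes $\int_0^{2\pi}\dfrac{\sqrt{\l_1}\cos^2\tht+\sqrt{\l_2}\sin^2\tht}{\l_1^{-1}\cos^2\tht+\l_2^{-1}\sin^2\tht}\,\mathrm{d}\tht$, which one evaluates by the standard half-angle substitution to obtain $\pi(\l_1+\l_2)=\pi\operatorname{tr}\mathrm{A}$.

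The main obstacle is organizing the singularity bookkeeping cleanly: one must justify that the flux of $V$ through $\p\om$ equals its flux through a small circle (i.e.\ that $V$ is divergence-free in the annular region, which is transparent after the linear substitution $y=\mathrm{A}^{\frac12}z$ reducing $V$ to the Newtonian field $z/|z|^2$), and then carry out the final trigonometric integral correctly, keeping track of the orientation convention for $\nu$. Everything else is a routine computation, and the identity $\int_0^{2\pi}\bigl(\l_1^{-1}\cos^2\tht+\l_2^{-1}\sin^2\tht\bigr)^{-1}\bigl(\sqrt{\l_1}\cos^2\tht+\sqrt{\l_2}\sin^2\tht\bigr)\,\mathrm{d}\tht=\pi(\l_1+\l_2)$ can be checked directly or by a residue computation.
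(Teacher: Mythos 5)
Your strategy — recognize the left-hand side as the flux of a vector field $V$ through $\p\om$ and shrink $\p\om$ to a small circle — is close in spirit to what the paper does, but both of the technical steps that carry the proposal fail for a non-scalar matrix $\mathrm{A}$.

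First, the vector field $V(y)=\mathrm{A}^{1/2}y/|\mathrm{A}^{-1/2}y|^{2}$ is \emph{not} divergence-free away from the origin, and the substitution $y=\mathrm{A}^{1/2}z$ does not repair this. Under a linear change of variable the flux form of $V$ transforms as $V\mapsto (\det\mathrm{A}^{1/2})\,\mathrm{A}^{-1/2}\,V\!\circ\!\Phi$, so that $V$ is carried to $\sqrt{\det\mathrm{A}}\,\mathrm{A}^{1/2}z/|z|^{2}$ rather than $z/|z|^{2}$; the extra constant matrix $\mathrm{A}^{1/2}$ spoils the cancellation, and indeed a direct computation (e.g.\ at $y=e_1$ with $\mathrm{A}=\operatorname{diag}(\l_1,\l_2)$) gives $\operatorname{div}V=\l_1(\sqrt{\l_2}-\sqrt{\l_1})\ne 0$ unless $\l_1=\l_2$. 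Consequently the flux of $V$ through $\p\om$ is \emph{not} equal to its flux through a small circle $\p B_\r(0)$, so the whole reduction collapses. Second, even taking the small-circle reduction for granted, the closing trigonometric identity is false: for $\mathrm{A}=\l\mathrm{E}$ one finds
\begin{equation*}
\int_0^{2\pi}\frac{\sqrt{\l}\cos^2\tht+\sqrt{\l}\sin^2\tht}{\l^{-1}\cos^2\tht+\l^{-1}\sin^2\tht}\,\mathrm{d}\tht
=2\pi\,\l^{3/2},
\end{equation*}
which equals $\pi\tr\mathrm{A}=2\pi\l$ only when $\l=1$; more generally the integral evaluates to $2\pi(\l_1+\l_2)\sqrt{\l_1\l_2}/(\sqrt{\l_1}+\sqrt{\l_2})$, not $\pi(\l_1+\l_2)$.

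The reason the paper's argument works and yours does not is that the paper does not apply the ordinary divergence theorem to $V$. Instead it applies Green's formula for the \emph{constant-coefficient} second-order operator $\sum_{i,j}A_{ij}(x_0)\p_{x_i}\p_{x_j}$ to the function $\ln|\mathrm{A}^{-1/2}(x-x_0)|$, which that operator annihilates; the boundary terms then involve the \emph{conormal} derivative $\nu^{T}\mathrm{A}\nabla$, whose associated flux vector $\mathrm{A}\nabla\ln|\mathrm{A}^{-1/2}(x-x_0)|$ is genuinely divergence-free, so the flux is indeed domain-independent and can be evaluated on a shrinking level set of $|\mathrm{A}^{-1/2}(x-x_0)|$. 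If you want to keep a purely "vector-field" formulation, you must work with the conormal flux vector, not with $V$, and you must replace the ordinary divergence theorem with the Green identity for the frozen-coefficient elliptic operator.
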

\begin{proof}
Let us express the integral on the left-hand side of (\ref{5.5}).
We observe that
\begin{equation*}
\sum\limits_{i,j=1}^{2} A_{ij}(x_0)\frac{\p^2\ }{\p x_i\p x_j} \ln |\mathrm{A}^{-\frac{1}{2}}(x-x_0)|=0
\end{equation*}
holds in the vicinity of $x_0$. To see that this the case, one can introduce local coordinates, $y:=\mathrm{A}^{-\frac{1}{2}}(x-x_0)$, in which the expression in question is nothing else than $\Delta\ln|y|$. Integrating it over $\omega$ with a small disc centered at $x_0$ deleted, using Green's formula, we get
\begin{equation*}
0=\!\!\int\limits_{\om\setminus\{x:\, |y|<\d\}}  \sum\limits_{i,j=1}^{2} A_{ij}(x_0)\frac{\p^2\ }{\p x_i\p x_j} \ln |\mathrm{A}^{-\frac{1}{2}}(x-x_0)|\,\mathrm{d}x
=-\int\limits_{\p\om} \frac{\p\ }{\p\mathrm{n}}\ln |\mathrm{A}^{-\frac{1}{2}}\x(s)|\,\mathrm{d}s -\!\!\int\limits_{\{x:\,|y|=\d\}} \!\!\frac{\mathrm{d}s}{|y|}
\end{equation*}
Evaluating the integrals on the right-hand side and taking the limit $\delta\to+0$ in the second one, we find
\begin{equation*}
0=-\int\limits_{\p\om} \frac{\nu\cdot\mathrm{A}^\frac{1}{2}\x(s)\,ds} {|\mathrm{A}^{-\frac{1}{2}}\x(s)|^2}\,\mathrm{d}s-\pi\tr\mathrm{A},
\end{equation*}
in other words, the sought identity (\ref{5.5}).
\end{proof}

In view of the assumptions made about the operator $\Op_\Om$, in particular, of the estimates (\ref{2.0a}) and (\ref{2.0c}), the spectrum of this operator is contained in the interval $[c_2-c_1,\infty)$, and since $c_2>0$, the inverse operator $(\Op_\Om+c_1)^{-1}$ is well-defined and bounded. In the following lemma we employ the polar coordinates $(r,\tht)$ associated with the variables $y$.
\begin{lemma}\label{lm:G}
The boundary-value problem (\ref{2.6}), (\ref{2.7}) has a unique solution which belongs to $W_2^2(\Om\setminus B_\d(x_0))\cap C^1(\overline{B_\d}\setminus\{x_0\})$ for all sufficiently small $\d>0$ and has the following asymptotic behavior in the vicinity of $x_0$,
\begin{equation}\label{3.0}
G(x)=\ln r+a+r\big((a_1\sin\tht+a_2\cos\tht)\ln r+P(\sin\tht,\cos\tht)\big) + O(r^2\ln^2r),\quad\, x\to x_0,
\end{equation}
where $a$ is a real number, $a_1,a_2\in\mathbb{C}$, and $P$ is a polynomial.
\end{lemma}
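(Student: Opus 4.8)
The plan is to construct $G$ by correcting an explicit local parametrix near $x_0$ with the bounded operator $(\Op_\Om+c_1)^{-1}$, and then to extract the expansion \eqref{3.0} from the regularity of the correction. For the \emph{local parametrix} I would pass to the coordinates $y=\mathrm{A}^{-\frac12}(x-x_0)$, in which the frozen principal part $-\sum_{i,j}A_{ij}(x_0)\,\p_{x_i}\p_{x_j}$ becomes $-\D_y$; recall that $\D_y\ln|y|=0$, as already noted in the proof of Lemma~\ref{lm3.0}. Writing $\hat\Op+c_1=-\D_y+\cL'$ in a neighbourhood of $x_0$, where $\cL'$ collects the contributions of $A_{ij}(x)-A_{ij}(x_0)$, of the first-order part and of $A_0+c_1$, I would look for a formal solution of $(\hat\Op+c_1)G_{\mathrm{as}}=0$ of the form $G_{\mathrm{as}}=\ln r+\sum_{k\geq1}r^{k}\big(p_k(\tht)\ln r+q_k(\tht)\big)$, $r=|y|$, with trigonometric polynomials $p_k,q_k$, solving it order by order. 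The mechanism is $\D_y\big(r^{k}e^{\iu m\tht}\big)=(k^2-m^2)r^{k-2}e^{\iu m\tht}$, so each Fourier mode produced at step $k$ is compensated by a term $r^{k}(\cdots)$ except in the resonant cases $k=|m|$, where a factor $\ln r$ must be admitted; at order $r$ the resonance occurs only for $m=\pm1$, giving $p_1(\tht)=a_1\sin\tht+a_2\cos\tht$ as in \eqref{3.0}, and at order $r^2$ for $m=\pm2$, which produces the $r^2\ln^2 r$ terms. Expanding $A_{ij},A_j$ (which lie in $C^3$) and $A_0$ (which lies in $C^2$) in Taylor polynomials at $x_0$, one carries the construction through the finitely many orders allowed by this smoothness, obtaining $p_1,q_1$, the $r^2$-terms, and a remainder $F:=(\hat\Op+c_1)(\chi G_{\mathrm{as}})$, where $\chi$ is a smooth cut-off equal to $1$ near $x_0$ and supported in $\Om_0$; the point of carrying the expansion far enough is exactly that $F$ should lie in $L_2(\Om)$, be compactly supported away from $\p\Om$, and be Hölder continuous and vanishing near $x_0$ — and keeping this bookkeeping precise under the limited smoothness of the coefficients is the technically most demanding part of the argument.

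Next, for the \emph{global correction}, since $\spec(\Op_\Om)\subset[c_2-c_1,\infty)$ with $c_2>0$ the point $-c_1$ belongs to the resolvent set of $\Op_\Om$, so I may set $\tilde G:=-(\Op_\Om+c_1)^{-1}F\in\Dom(\Op_\Om)$ and define $G:=\chi G_{\mathrm{as}}+\tilde G$. Then $(\hat\Op+c_1)G=0$ in $\Om\setminus\{x_0\}$, and $\cB G=0$ on $\p\Om$ because $G=\tilde G$ there. By elliptic regularity together with the assumed structure of $\Op_\Om$ one gets $G\in W_2^2(\Om\setminus B_\d(x_0))$, while interior Schauder estimates for $(\hat\Op+c_1)\tilde G=F$ with $F\in C^{0,\g}$ near $x_0$ give $\tilde G\in C^{2,\g}$ there, hence $G\in C^1(\overline{B_\d}\setminus\{x_0\})$. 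Finally, inserting the Taylor expansion $\tilde G(x)=\tilde G(x_0)+\nabla\tilde G(x_0)\cdot(x-x_0)+O(|x-x_0|^2)$, whose linear term is $r$ times a trigonometric polynomial of degree one, into $G=\chi G_{\mathrm{as}}+\tilde G$ and combining it with the explicit terms of $G_{\mathrm{as}}$ yields \eqref{3.0} with $a:=\tilde G(x_0)$ — and therefore \eqref{2.7}, since $r=|\mathrm{A}^{-\frac12}(x-x_0)|$; the $r^2$-terms of $G_{\mathrm{as}}$ and the remainder $O(|x-x_0|^2)$ of $\tilde G$ are absorbed into $O(r^2\ln^2 r)$.

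For \emph{uniqueness}, let $G_1,G_2$ be two solutions in the stated class and $w:=G_1-G_2$. By \eqref{3.0} their logarithmic and singular parts coincide, so $w$ is bounded near $x_0$, solves $(\hat\Op+c_1)w=0$ in the punctured neighbourhood, satisfies $\cB w=0$ on $\p\Om$, lies in $W_2^2(\Om\setminus B_\d(x_0))$, and belongs to $L_2(\Om)$ because $\ln r\in L_2$ locally in the plane. Since a point has zero $W_2^1$-capacity in $\mathbb{R}^2$, the singularity of the bounded solution $w$ at $x_0$ is removable: $w\in W_2^1$ near $x_0$ and is a weak solution of $(\hat\Op+c_1)w=0$ across $x_0$, and elliptic regularity then gives $w\in W_2^2$ near $x_0$. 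By the assumptions on $\Op_\Om$ such a $w$ belongs to $\Dom(\Op_\Om)$ and satisfies $(\Op_\Om+c_1)w=0$, whence $w=0$ because $-c_1$ is in the resolvent set; thus $G_1=G_2$, and in particular the constant $a$ is unique.
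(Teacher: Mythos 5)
Your construction follows essentially the same route as the paper's proof: pass to the local coordinates $y=\mathrm{A}^{-1/2}(x-x_0)$, build a local parametrix $G_{\mathrm{as}}$ order by order (the paper's $G_0$ in \eqref{3.19a}), correct it globally through $G=\chi G_{\mathrm{as}}-(\Op_\Om+c_1)^{-1}(\hat\Op+c_1)(\chi G_{\mathrm{as}})$, and then invoke interior Schauder estimates near $x_0$ to get a $C^{2,\g}$ correction whose Taylor expansion produces \eqref{3.0}. Your explicit account of the $k=|m|$ resonances that force the $\ln r$ factors, and your removable-singularity argument for uniqueness, are useful elaborations of points the paper treats only implicitly.

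There is, however, a genuine gap: the lemma also asserts that the constant $a$ in \eqref{3.0} is \emph{real}, and your argument does not establish this. It is not automatic, because the differential expression $\hat\Op$ contains the genuinely complex first-order term $\iu\sum_{j}\bigl(A_j\frac{\p}{\p x_j}+\frac{\p}{\p x_j}A_j\bigr)$, so neither $G$ nor the correction $\tilde G$ need be real-valued, and $a=\tilde G(x_0)$ is a priori a complex number. The paper devotes the final step of the proof to this point: it starts from the form identity $\hf_\Om[G_2]+c_1\|G_2\|^2_{L_2(\Om)}=-\bigl((\hat\Op+c_1)G_1,G_2\bigr)_{L_2(\Om)}$ in \eqref{3.27}, regroups it as \eqref{3.23}, and integrates by parts over $\Om\setminus\{|y|<\tilde\d\}$. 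As $\tilde\d\to+0$ the boundary terms yield $-\pi a\tr\mathrm{A}$ on the right in \eqref{3.28}, the divergent contribution $\pi\tr\mathrm{A}\ln\tilde\d$ on the left in \eqref{3.25} is compensated by the logarithmic divergence of $\sum_{i,j}(A_{ij}\frac{\p G_1}{\p x_j},\frac{\p G_1}{\p x_i})$, and the first-order part enters only through $-2\IM\sum_j(A_j\frac{\p G_1}{\p x_j},G_1)$, which is real. The resulting identity exhibits $\pi a\tr\mathrm{A}$ as a sum of manifestly real quantities; some argument of this sort is needed to complete the proof of the lemma as stated.
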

\begin{proof}
The differential expression \eqref{2.1} can be rewritten as
\begin{equation}\label{3.19}
\hat{\Op}=-\sum\limits_{i,j=1}^{2} A_{ij}\frac{\p^2\ }{\p x_i\p x_j} +\sum\limits_{j=1}^{2}\left(2\iu A_j-\sum\limits_{i=1}^{2} \frac{\p A_{ij}}{\p x_i}\right)\frac{\p\ }{\p x_j}+\left(\iu\sum\limits_{j=1}^{2}\frac{\p A_j}{\p x_j}+A_0\right).
\end{equation}
Using this representation and passing to the local variables $y$ in the vicinity of the point $x_0$ introduced in the proof of Lemma~\ref{lm3.0}, it is straightforward to confirm that there exists a function
\begin{equation}\label{3.19a}
\begin{aligned}
G_0(x)=&\ln r+r\big((a_1\sin\tht+a_2\cos\tht)\ln r+P(\sin\tht,\cos\tht)\big)
\\
&+r^2\big(P_1(\sin\tht,\cos\tht)\ln^2r+P_2(\sin\tht,\cos\tht)\ln r+P_3(\sin\tht,\cos\tht)\big),
\end{aligned}
\end{equation}
where $P$ and $P_i,\, =1,2,3,$ are some polynomials, such that the function $F_0(x):=(\hat{\Op}+c_1)G_0(x)$ is continuously differentiable in the punctured neighborhood of the point $x_0$ and exhibits there the following asymptotics,
\begin{equation}\label{3.19b}
F_0(x)=\mathcal{O}(r\ln r),\quad\; x\to x_0.
\end{equation}
We seek the solution to the boundary-value problem (\ref{2.6}), (\ref{2.7}) in the form
\begin{equation}\label{3.1}
G(x)=G_1(x)+G_2(x), \quad\; G_1:=\chi_\Om G_0,
\end{equation}
where for the unknown function $G_2$ we obtain the operator equation
\begin{equation}\label{3.2}
 (\Op_\Om+c_1) G_2=F,\quad\; F:= -\chi_\Om F_0+F_1.
\end{equation}
Here $F_1$ is a linear combination of the products of the derivatives of $G_0$ and $\chi_\Om$ up to the second order. If $\d>0$ is chosen small enough to ensure that $B_{2\d}(x_0)\subset \Om$, the above indicated properties of the function $F_0$ imply that $F$ belongs to $L_2(\Om)\cap C^\g(\overline{B_\d(x_0)})$ for all $\g\in(0,1)$.

Since the resolvent $(\Op_\Om+c_1)^{-1}$ is well-defined, equation (\ref{3.2}) has a unique solution which belongs to $\Dom(\Op_\Om)$. Moreover, using the standard Schauder estimates \cite{GT83}, we infer that it also belongs to $C^{2+\g}(\overline{B}_\d)$, which means, in particular, that the function $G_2$ has the Taylor expansion,
\begin{equation}\label{3.2a}
G_2(x)=a + a_3 y_1+ a_4 y_2+O(|y|^2),\quad\; x\to x_0,\;\; a_3,a_4\in\mathbb{C},
\end{equation}
where $y_1,y_2$ are the components of the vector $y=\mathrm{A}^{-\frac{1}{2}}(x-x_0)$. Returning to the function $G$, we conclude that problem (\ref{2.6}), (\ref{2.7}) is uniquely solvable and identity (\ref{3.0}) holds true.

It remains to check that the number $a$ is real. According (\ref{3.2}) and the definition of the functions $G_0$ and $F$ we have the identity
\begin{equation}\label{3.27}
\hf_\Om[G_2]+c_1\|G_2\|_{L_2(\Om)}^2=(F,G_2)_{L_2(\Om)} =-\big((\hat{\Op}+c_1)G_1,G_2\big)_{L_2(\Om)},
\end{equation}
which can be rewritten as follows,
\begin{equation}\label{3.23}
\hf_\Om[G_2]+c_1\|G_2\|_{L_2(\Om)}^2 -\big((\hat{\Op}+c_1)G_1,G_1\big)_{L_2(\Om)} =-\big((\hat{\Op}+c_1)G_1,G\big)_{L_2(\Om)}.
\end{equation}
We denote  $\Om^{\tilde{\d}}:=\Om\setminus \{x:\, |y|<\tilde{\d}\}$. In the last term on the left-hand side of (\ref{3.23}) we integrate by parts once bearing in mind the asymptotics (\ref{3.0}), (\ref{3.2a}), the identity (\ref{5.5}), and the fact that $G_1=G_0$ holds in the vicinity of the point $x_0$, obtaining
\begin{equation}\label{3.25}
\begin{aligned}
((\Op_\Om +c_1)G_1,G_1)_{L_2(\Om)}=&\lim\limits_{\tilde{\d}\to+0}\Bigg(
\sum\limits_{i,j=1}^{2} \left(A_{ij}\frac{\p G_1}{\p x_j}, \frac{\p G_1}{\p x_i}\right)_{L_2(\Om^{\tilde{\d}})} -2\IM\sum\limits_{j=1}^{2}\left(A_j\frac{\p G_1}{\p x_j},G_1\right)_{L_2(\Om^{\tilde{\d}})}
\\
&\hphantom{\lim\limits_{\tilde{\d}\to+0}\Bigg(}
+((A_0+c_1)G_1,G_1)_{L_2(\Om^{\tilde{\d}})}
-\int\limits_{\{x:\, |y|=\tilde{\d}\}} \overline{G_0}\,\frac{\p G_0}{\p\mathrm{n}}\,ds
\Bigg)
\\
=&\lim\limits_{\tilde{\d}\to+0}\Bigg(
\sum\limits_{i,j=1}^{2} \left(A_{ij}\frac{\p G_1}{\p x_j}, \frac{\p G_1}{\p x_i}\right)_{L_2(\Om^{\tilde{\d}})} -2\IM\sum\limits_{j=1}^{2}\left(A_j\frac{\p G_1}{\p x_j},G_1\right)_{L_2(\Om^{\tilde{\d}})}
\\
&\hphantom{\lim\limits_{\tilde{\d}\to+0}\Bigg(}
+((A_0+c_1)G_1,G_1)_{L_2(\Om^{\tilde{\d}})}
+\pi\tr \mathrm{A}\ln\tilde{\d}\Bigg).
\end{aligned}
\end{equation}
In the same way we integrate by parts twice on the right-hand side of (\ref{3.23}),
\begin{equation}\label{3.28}
\big((\hat{\Op}+c_1)G_1,G\big)_{L_2(\Om)}=\lim\limits_{\tilde{\d}\to+0} \int\limits_{\{x:\, |y|=\tilde{\d}\}}\left(G\overline{\frac{\p G_0}{\p\mathrm{n}}}   -\overline{G_0}\,\frac{\p G}{\p\mathrm{n}}\right)\,ds=-\pi a \tr\mathrm{A}.
\end{equation}
Substituting this identity together with (\ref{3.25}) into (\ref{3.23}), we obtain a formula for the constant $a$ showing that it is real. This concludes the proof.
\end{proof}

Denote next $\Pi_\e:=B_{2R_2}(x_0)\setminus\om_\e$, then we have the following result.

\begin{lemma}\label{lm:bnd}
For all $v\in\H^1(\Pi_\e)$ the estimate
\begin{equation}\label{3.3}
\|v\|_{L_2(\p\om_\e)}^2\leqslant C\e\Big(|\ln\e|\|\nabla v\|_{L_2(\Pi_\e)}^2+\|v\|_{L_2(\Pi_\e)}^2\Big)
\end{equation}
is valid, where $C$ is a fixed constant independent of $\e$ and $v$. If, in addition, the function $v$ is defined on entire ball $B_{2R_2}(x_0)$ and belongs to $\H^1(B_{2R_2}(x_0))$, then the estimate
\begin{equation}\label{3.3a}
\|v\|_{L_2(B_{2R_2}(x_0)}^2\leqslant C\e^2\left( |\ln\e|  \|\nabla v\|_{L_2(B_{R_2}(x_0)}^2 +  \|v\|_{L_2(B_{2R_2}(x_0)}^2\right),
\end{equation}
holds, where $C$ is a fixed constant independent of $\e$ and $v$.
\end{lemma}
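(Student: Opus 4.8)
The plan is to prove both trace-type estimates by a scaling argument reducing to a fixed domain, combined with the standard trace inequality and Poincaré-type bounds. For \eqref{3.3}, the key is to pass from $\Pi_\e=B_{2R_2}(x_0)\setminus\om_\e$ to a reference configuration. First I would split $\Pi_\e$ into an inner annular region $\Pi_\e'=B_{R_2}(x_0)\setminus\om_\e$ adjacent to $\p\om_\e$ and the fixed outer part $B_{2R_2}(x_0)\setminus B_{R_2}(x_0)$; since $\p\om_\e$ lies inside $B_{R_1\e}(x_0)$ by \eqref{3.6}, only the inner region matters for the trace, and I would rescale it by $x\mapsto y=\e^{-1}(x-x_0)$ so that $\om_\e$ becomes the fixed set $\om$ and $B_{R_2}(x_0)$ becomes $B_{R_2/\e}(0)$, a ball of large radius $\r:=R_2/\e$. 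In these rescaled coordinates $v$ becomes $\tilde v(y)=v(x_0+\e y)$, with $\|v\|_{L_2(\p\om_\e)}^2=\e\|\tilde v\|_{L_2(\p\om)}^2$, $\|v\|_{L_2(\Pi_\e')}^2=\e^2\|\tilde v\|_{L_2(B_\r\setminus\om)}^2$, and $\|\nabla v\|_{L_2(\Pi_\e')}^2=\|\nabla_y\tilde v\|_{L_2(B_\r\setminus\om)}^2$.

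The core estimate then becomes the $\e$-free statement that on the annular domain $B_\r\setminus\om$ with $\r$ large one has $\|\tilde v\|_{L_2(\p\om)}^2\leqslant C\big(\ln\r\,\|\nabla\tilde v\|_{L_2(B_\r\setminus\om)}^2+\r^{-2}\|\tilde v\|_{L_2(B_\r\setminus\om)}^2\big)$; indeed, substituting $\r=R_2/\e$ and multiplying through by $\e$ recovers \eqref{3.3} up to adjusting constants and absorbing $\ln(R_2/\e)$ into $|\ln\e|$. To prove this reference inequality I would use the standard trace theorem on the fixed unit-width collar $B_{R_1+1}\setminus\om$ to get $\|\tilde v\|_{L_2(\p\om)}^2\leqslant C\|\tilde v\|_{H^1(B_{R_1+1}\setminus\om)}^2$, and then control $\|\tilde v\|_{L_2(B_{R_1+1}\setminus\om)}^2$ by $\ln\r\,\|\nabla\tilde v\|_{L_2(B_\r\setminus\om)}^2+\r^{-2}\|\tilde v\|_{L_2(B_\r\setminus\om)}^2$. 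The second ingredient is the logarithmically-weighted Poincaré inequality in a planar annulus: writing $\tilde v$ in polar coordinates and using $\tilde v(r_1,\tht)-\tilde v(r_2,\tht)=\int_{r_1}^{r_2}\p_r\tilde v\di r$ together with the Cauchy--Schwarz inequality in the form $\big|\int_1^\r\p_r\tilde v\di r\big|^2\leqslant\big(\int_1^\r r^{-1}\di r\big)\big(\int_1^\r r|\p_r\tilde v|^2\di r\big)=\ln\r\int_1^\r r|\p_r\tilde v|^2\di r$; averaging the identity over $r_2\in[\r/2,\r]$ yields, after integration in $\tht$, exactly a bound of the required shape. The $\r^{-2}$ factor on the zeroth-order term is precisely the gain from averaging the anchor value over an annulus of area $\sim\r^2$.

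For the second estimate \eqref{3.3a}, where $v$ is defined on the full ball $B_{2R_2}(x_0)$ (no hole), I would argue identically but now anchor against the average of $v$ over the whole ball rather than a boundary trace: in rescaled variables, with $\r=2R_2/\e$ replaced appropriately, the logarithmically-weighted Poincaré inequality gives $\|\tilde v-\langle\tilde v\rangle\|_{L_2(B_\r)}^2\leqslant C\ln\r\,\|\nabla\tilde v\|_{L_2(B_\r)}^2\cdot(\text{area factor})$ and then $\langle\tilde v\rangle$ is controlled by $\r^{-2}\|\tilde v\|_{L_2(B_\r)}^2$; undoing the scaling $\|v\|_{L_2(B_{2R_2}(x_0))}^2=\e^2\|\tilde v\|^2$ produces the $\e^2$ prefactor and the stated form, with the gradient taken only over $B_{R_2}(x_0)$ because the estimate is really needed near the center and the region $B_{R_2/\e}(0)$ in rescaled coordinates suffices for running the radial integration. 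The main obstacle — and the only genuinely non-routine point — is getting the logarithmic weight to appear with the right power and the zeroth-order term with an honest $\r^{-2}$ (equivalently $\e^2$ versus $\e$) decay; this requires being careful to average the reference point over an annulus/ball of the full size $\r$ rather than fixing it, and to track that the one-dimensional Cauchy--Schwarz step along rays produces exactly $\ln\r$ and not a power of $\r$. Everything else — the trace theorem on a fixed collar, the change of variables, and collecting the fixed outer contribution $B_{2R_2}\setminus B_{R_2}$ which scales trivially — is standard.
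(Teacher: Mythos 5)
Your argument is essentially the same as the paper's: in both, the logarithm comes from the one-dimensional Cauchy--Schwarz inequality along rays against the weights $r$ and $r^{-1}$, the area factor $\e^2$ (equivalently $\rho^{-2}$) comes from anchoring in the far field and averaging, and a trace-type inequality near $\p\om_\e$ finishes the job; whether one rescales once to a large annulus (you) or works in physical coordinates with a cut-off on the thin annulus $B_{2R_1\e}(x_0)\setminus B_{R_1\e}(x_0)$ (the paper) is a cosmetic difference. The only loose point is your stated bound for \eqref{3.3a}, $\|\tilde v-\langle\tilde v\rangle\|_{L_2(B_\rho)}^2\leqslant C\ln\rho\,\|\nabla\tilde v\|_{L_2(B_\rho)}^2\cdot(\text{area factor})$: taken over all of $B_\rho$ this is not a valid Poincar\'e-type inequality (the optimal weight on $B_\rho$ is $\rho^2$, not $\ln\rho$), so the left-hand side must be restricted to the fixed small ball $B_{2R_1}$ in rescaled coordinates --- exactly the restriction the paper builds in by integrating its pointwise estimate \eqref{3.8a} only over $B_{2R_1\e}(x_0)$.
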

\begin{proof}
We denote by $\chi:\: \mathbb{R}_+\to [0,1]$ an infinitely differentiable cut-off function, equal to one if $t<1$ and vanishing for $t>2$. It is clear that
\begin{equation}\label{3.4}
v(x)=v(x)\chi\left(\frac{|x-x_0|}{R_1\e}\right)=:v_\e\;\;\text{on}\;\; \p\om_\e,\quad\text{and}\;\; v_\e=0\;\;\text{on}\;\; \p B_{2R_1\e}(x_0).
\end{equation}
We rescale variables, $x\mapsto (x-x_0)\e^{-1}$, and by standard embedding theorems we get
\begin{equation}\label{3.5}
\begin{aligned}
\|v\|_{L_2(\p\om_\e)}^2=&\:\e\|v_\e(x_0+\e\,\cdot\,)\|_{L_2(\p\om)}^2
\\[.3em]
\leqslant&\: C\e\|\nabla v_\e(x_0+\e\,\cdot\,)\|_{L_2(B_{2R_1}(0)\setminus\om)}^2
\\[.3em]
=&\:C\e\|\nabla v_\e\|_{L_2(B_{2R_1\e}(x_0)\setminus\om_\e)}^2
\\[-.1em]
\leqslant &\: C\left(\e\|\nabla v \|_{L_2(B_{2R_1\e}(x_0)\setminus\om_\e)}^2+ \e^{-1}\|v\|_{L_2(B_{2R_1\e}(x_0)\setminus B_{R_1\e}(x_0))}^2 \right),
\end{aligned}
\end{equation}
where the symbol $C$  stands for various inessential constants independent of $\e$ and $v$. Let us estimate the term $\|u\|_{L_2(B_{2R_1\e}(x_0)\setminus B_{R_1\e}(x_0))}^2$.

It follows from (\ref{3.6}) that
\begin{equation*}
v(x)=v(x)\chi\left(\frac{|x-x_0|}{R_2}\right)\quad\text{in}\;\; B_{2R_1\e}(x_0)\setminus \om_\e.
\end{equation*}
We denote $r:=|x-x_0|$ for $x\in B_{2R_1\e}(x_0)\setminus B_{R_1\e}(x_0)$, and furthermore, we put $x':= x_0+\frac{r'}{r}(x-x_0)$, then we have
\begin{equation*}
|v(x)|=\left|\,\int\limits_{2R_2}^{r} \frac{\p\ }{\p r}\left(v(x')\chi\left(\frac{r'}{R_2}\right)\right)\,\mathrm{d}r'\right| \leqslant \int\limits_{r}^{2R_2} |\nabla v(x')|\chi\left(\frac{r'}{R_2}\right) \,\mathrm{d}r' + \frac{1}{R_2} \int\limits_{r}^{2R_2} |v(x')|\left|\chi'\left(\frac{r'}{R_2}\right) \right| \,\mathrm{d}r'.
\end{equation*}
Using next Cauchy-Schwarz inequality together with the properties of the cut-off function, we arrive at the estimate
\begin{equation}\label{3.8a}
\begin{aligned}
|v(x)|^2\leqslant&\: 2\left(\int\limits_{r}^{2R_2} |\nabla v(x')|\chi\left(\frac{r'}{R_2}\right) \,\mathrm{d}r'\right)^2 + \frac{2}{R_2^2} \left(\int\limits_{r}^{2R_2} |v(x')|\left|\chi'\left(\frac{r'}{R_2}\right) \right| \,\mathrm{d}r'\right)^2
\\
\leqslant &\: 2\ln\frac{2R_2}{r} \int\limits_{r}^{2R_2} |\nabla v(x')|^2 r' \,\mathrm{d}r' + \frac{2\ln 2R_2}{R_2^2} \Big(\sup\limits_{t\in[1,2]} |\chi'(t)|\Big)^2\int\limits_{r}^{2R_2} |v(x')|^2r' \,\mathrm{d}r'
\end{aligned}
\end{equation}
Integrating this inequality over $B_{2R_1\e}(x_0)\setminus B_{R_1\e}(x_0)$ we find
\begin{equation*}
\|v\|_{L_2(B_{2R_1\e}(x_0)\setminus B_{R_1\e}(x_0))}^2\leqslant C\e^2\left( |\ln\e|  \|\nabla v\|_{L_2(B_{R_2}(x_0)\setminus B_{R_1\e}(x_0))}^2 +  \|v\|_{L_2(B_{2R_2}(x_0)\setminus B_{R_2}(x_0))}^2\right),
\end{equation*}
and substituting finally from here into the right-hand side of (\ref{3.5}) we obtain the sought estimate (\ref{3.3}).  Finally, if 
$v\in\H^1(B_{2R_2(x_0)})$, we  integrate  estimate (\ref{3.8a}) over $B_{2R_1\e}(x_0)$ and arrive immediately at estimate (\ref{3.3a}) which concludes the proof.
\end{proof}

\begin{lemma}\label{lm:bnd-mean}
For all $v\in \H^1(\Pi_\e)$ satisfying the condition
\begin{equation}\label{3.10}
\int\limits_{\p\om_\e} v\,ds=0
\end{equation}
the inequality
\begin{equation}\label{3.11}
\|v\|_{L_2(\p\om_\e)}^2\leqslant C\e\|\nabla v\|_{L_2(\Pi_\e)}^2
\end{equation}
holds, where $C$ is a constant independent of $\e$ and $v$. If, in addition, the function $v$ is defined on the entire ball $B_{2R_2}(x_0)$ and $v\in \H^2(B_{2R_2}(x_0))$, then
\begin{equation}\label{3.12}
\|v\|_{L_2(\p\om_\e)}^2\leqslant C\e^3|\ln\e|\|v\|_{\H^2(B_{2R_2}(x_0))}^2,
\end{equation}
where $C$ is a constant independent of $\e$ and $v$.
\end{lemma}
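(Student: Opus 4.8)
The plan is to follow the same strategy as in Lemma~\ref{lm:bnd}, rescaling to a fixed reference geometry and invoking trace/embedding inequalities, but now exploiting the zero-mean condition \eqref{3.10} to kill the dependence on $|\ln\e|$ in \eqref{3.11}. First I would rescale $x\mapsto y:=(x-x_0)\e^{-1}$, so that $\p\om_\e$ becomes $\p\om$ and $\Pi_\e$ maps to the fixed annular region $\Pi_1=B_{2R_2/\e}(0)\setminus\om$. Since this domain grows with $\e$, I would instead localize: using a fixed cut-off $\chi(|y|/R_1)$ supported in $B_{2R_1}(0)$ as in \eqref{3.4}, set $w(y):=v(x_0+\e y)\chi(|y|/R_1)$, a function on the fixed domain $\Theta:=B_{2R_1}(0)\setminus\om$ vanishing on $\p B_{2R_1}(0)$. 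Now write $w=\bar w+\tilde w$, where $\bar w$ is the mean of $w$ over $\p\om$; by \eqref{3.10} the mean of $v(x_0+\e\,\cdot)$ over $\p\om$ vanishes, but the cut-off does not change $w$ on $\p\om$ (there $\chi\equiv1$), so in fact $\bar w=0$ already and $w$ itself has zero mean on $\p\om$.

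The key analytic ingredient is then the scale-invariant trace–Poincar\'e inequality on the fixed domain $\Theta$: for $w\in\H^1(\Theta)$ with $\int_{\p\om}w\,ds=0$ one has $\|w\|_{L_2(\p\om)}^2\leqslant C\|\nabla w\|_{L_2(\Theta)}^2$, the constant depending only on $\Theta$. This follows by a standard compactness/contradiction argument (Rellich on $\H^1(\Theta)\hookrightarrow L_2(\Theta)$ together with continuity of the trace map), or by combining the Poincar\'e inequality $\|w-\langle w\rangle_{\p\om}\|_{\H^1(\Theta)}\leqslant C\|\nabla w\|_{L_2(\Theta)}$ with the bounded trace operator $\H^1(\Theta)\to L_2(\p\om)$. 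Undoing the scaling gives $\|v\|_{L_2(\p\om_\e)}^2=\e\|w(\cdot)\|_{L_2(\p\om)}^2\leqslant C\e\|\nabla w\|_{L_2(\Theta)}^2$, and expanding $\nabla w$ exactly as in \eqref{3.5} — one term with $\nabla v$ and one with the $\chi'$-factor supported in $B_{2R_1\e}(x_0)\setminus B_{R_1\e}(x_0)$ — produces $C\bigl(\e\|\nabla v\|_{L_2(\Pi_\e)}^2+\e^{-1}\|v\|_{L_2(B_{2R_1\e}\setminus B_{R_1\e})}^2\bigr)$. For the second term one would like to absorb it; but note that when $v$ is only in $\H^1(\Pi_\e)$ we can instead bound that annular $L_2$-norm by Poincar\'e on the annulus $B_{2R_1\e}\setminus B_{R_1\e}$ relative to its own boundary mean, which again costs only $\e^2\|\nabla v\|^2$ without a logarithm, since there $v$ restricted to circles has controlled oscillation; the cleaner route, which I would actually use, is to include the $\chi'$-region inside $\Theta$ from the start by choosing $w(y):=v(x_0+\e y)\chi(|y|/R_1)$ on $\Theta':=B_{2R_1}(0)\setminus\om$ and applying the scale-invariant trace-Poincar\'e inequality directly to $w$ on $\Theta'$, so that no separate estimate of the annular term is needed and \eqref{3.11} drops out with constant depending only on $\om,R_1$.

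For the second assertion \eqref{3.12}, I would argue differently, exploiting the extra regularity $v\in\H^2(B_{2R_2}(x_0))$. The zero-mean condition on $\p\om_\e$ lets me subtract the value $v(x_0)$: write $v=v(x_0)+(v-v(x_0))$; the constant $v(x_0)$ contributes $v(x_0)\int_{\p\om_\e}1\,ds=v(x_0)\,|\p\om_\e|$ to the mean, hence $\int_{\p\om_\e}(v-v(x_0))\,ds=-v(x_0)|\p\om_\e|$, which is not zero — so instead I subtract the first-order Taylor polynomial $\ell(x):=v(x_0)+\nabla v(x_0)\cdot(x-x_0)$ whose mean over the (symmetric up to $O(\e)$) curve $\p\om_\e$ I can control, or more robustly I just estimate directly: by the trace inequality \eqref{3.3} applied to $v-v(x_0)$ one gets $\|v-v(x_0)\|_{L_2(\p\om_\e)}^2\leqslant C\e(|\ln\e|\|\nabla v\|_{L_2(\Pi_\e)}^2+\|v-v(x_0)\|_{L_2(\Pi_\e)}^2)$, and on $B_{2R_1\e}(x_0)$ both $\|\nabla v\|_{L_2}^2=O(\e^2)$ and $\|v-v(x_0)\|_{L_2}^2=O(\e^4)$ times $\|v\|_{\H^2(B_{2R_2})}^2$ by Sobolev embedding $\H^2\hookrightarrow C$ and the mean-value estimate; combined with $|v(x_0)|^2|\p\om_\e|=O(\e)\|v\|_{\H^2}^2$ this would only give $O(\e^2|\ln\e|)$, one power short. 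To gain the missing $\e$ I genuinely need \eqref{3.10}: since $\int_{\p\om_\e}v\,ds=0$, I may write $v(x)=v(x)-\fint_{\p\om_\e}v\,ds$ and apply a \emph{second-order} scaled trace estimate, namely $\|v-\fint_{\p\om_\e}v\|_{L_2(\p\om_\e)}^2\leqslant C\e\cdot\e^2\,\|D^2 v\|^2_{L_2}+(\text{lower order})$, obtained by rescaling to $\p\om$ and using that on the fixed domain the map $w\mapsto w|_{\p\om}-\fint_{\p\om}w$ annihilates affine functions, hence is controlled by $\|D^2 w\|_{L_2}$ via a Bramble–Hilbert / compactness argument; translating back, $\|D^2 w\|_{L_2(\Theta)}^2=\e^2\|D^2 v\|_{L_2(B_{2R_1\e})}^2$ and the $\e$ from the boundary Jacobian and the cut-off interplay produce exactly $\e^3$, while the stray $\chi'$-terms, being lower-order derivatives on the annulus where $v$ oscillates by $O(\e)$, contribute the $|\ln\e|$ and at worst the same $\e^3$. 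Assembling these gives \eqref{3.12}.

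The main obstacle is this last point: getting the sharp power $\e^3$ rather than $\e^2$ in \eqref{3.12}. The naive use of \eqref{3.3} loses a factor $\e$ because it only sees that $v$ is close to the constant $v(x_0)$, whereas the zero-mean hypothesis \eqref{3.10} effectively removes the \emph{affine} part of $v$ on $\p\om_\e$, and capturing that requires a Bramble–Hilbert-type second-order estimate on the fixed reference annulus together with careful bookkeeping of how the rescaling Jacobian ($\e$ from $ds$), the gradient rescaling ($\e^{-1}$ per derivative), and the size of the hole interact. Everything else — the rescaling, the cut-off, the trace and Poincar\'e inequalities on fixed domains — is routine and parallels Lemma~\ref{lm:bnd}.
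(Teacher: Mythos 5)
Your strategy for \eqref{3.11} has a correct core idea — a scale-invariant trace–Poincar\'e inequality on the fixed annulus $\Theta=B_{2R_1}(0)\setminus\om$ with zero boundary mean — but your execution introduces a gap through the cut-off $\chi$. Once you set $w=v(x_0+\e\,\cdot)\chi(|\cdot|/R_1)$ and apply the trace–Poincar\'e inequality, $\nabla w$ contains the term $v(x_0+\e\,\cdot)\,\chi'/R_1$, and after undoing the scaling this contributes $\e^{-1}\|v\|_{L_2(B_{2R_1\e}\setminus B_{R_1\e})}^2$ exactly as in Lemma~\ref{lm:bnd}. Your first attempted fix (``Poincar\'e on the annulus relative to its own boundary mean'') does not work: the mean of $v$ over that annulus or its boundary is not controlled by the gradient alone, since the hypothesis \eqref{3.10} concerns $\p\om_\e$, not $\p B_{R_1\e}$; you would be left with an uncontrolled constant whose $L_2$ contribution on the annulus is of size $\e^2$ times that constant squared, with no estimate for the constant. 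Your second attempted fix (the ``cleaner route'') is circular: the $\chi'$-region \emph{is} already inside $\Theta$; putting it there does not remove the $\chi'$-term from $\nabla w$. The cleanest repair is simply to drop the cut-off entirely — it serves no purpose here — and apply the trace–Poincar\'e inequality directly to $w=v(x_0+\e\,\cdot)\in\H^1(\Theta)$, which has zero trace mean on $\p\om$ by hypothesis; this yields $\|v\|_{L_2(\p\om_\e)}^2\leqslant C\e\|\nabla v\|_{L_2(B_{2R_1\e}(x_0)\setminus\om_\e)}^2$, which is even sharper than \eqref{3.11} because the gradient is localized near the hole. The paper instead subtracts the mean $\la v\ra_\om$ over the solid annulus, uses Poincar\'e--Wirtinger there, and then controls $\la v\ra_\om$ via an auxiliary Neumann problem for $X$ (equations \eqref{3.13}--\eqref{3.18}); that is a bit longer but avoids relying on a boundary-mean Poincar\'e statement. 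Both routes are valid; yours, once the cut-off is removed, is arguably shorter.

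For \eqref{3.12} you go badly astray. Your diagnosis that ``the zero-mean hypothesis \eqref{3.10} effectively removes the \emph{affine} part of $v$ on $\p\om_\e$'' is false: a single zero-mean condition kills only the constant; it says nothing about $\nabla v(x_0)\cdot(x-x_0)$, so the Bramble--Hilbert machinery you invoke, which annihilates affines, is not available here. Moreover $\H^2$ in two dimensions does not embed in $C^1$, so $|\nabla v(x_0)|$ is not controlled by $\|v\|_{\H^2}$, and the linear-part contribution you would be left with cannot simply be swept into the estimate. The paper's mechanism is far simpler and you have overlooked it: the proof of \eqref{3.11} actually gives the localized bound $\|v\|_{L_2(\p\om_\e)}^2\leqslant C\e\|\nabla v\|_{L_2(B_{2R_1\e}(x_0)\setminus\om_\e)}^2$, and then one applies inequality \eqref{3.8a} with $v$ replaced by $\p v/\p x_i$ and integrates over $B_{2R_1\e}(x_0)$ to obtain $\|\nabla v\|_{L_2(B_{2R_1\e}(x_0))}^2\leqslant C\e^2|\ln\e|\,\|v\|_{\H^2(B_{2R_2}(x_0))}^2$ (estimate \eqref{3.24}). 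Substituting gives $\e\cdot\e^2|\ln\e|=\e^3|\ln\e|$ immediately. So the extra factor $\e^2|\ln\e|$ comes not from a higher-order annihilation property of the boundary-mean functional but from the smallness of $\|\nabla v\|$ on the $O(\e)$-neighborhood of the hole when $v\in\H^2$; this is also where the logarithm naturally appears, which your sketch leaves unexplained. As written, your argument for \eqref{3.12} does not close.
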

\begin{proof}
Throughout the proof the symbol $C$ stands for various inessential constants independent of $\e$ and $v$. The function
\begin{equation}\label{3.15}
v_\bot:=v-\la v\ra_\om,\quad\; \la v\ra_\om:=\frac{1}{\e^2|B_{2R_1(0)}\setminus\om|} \int\limits_{B_{2R_1\e}(x_0)\setminus\om_\e} v\,\mathrm{d}x,
\end{equation}
obviously satisfies the identities
\begin{equation}\label{3.16}
\int\limits_{B_{2R_1\e}(x_0)\setminus\om_\e} v_\bot(x)\, \mathrm{d}x=0,\qquad
\int\limits_{B_{2R_1}(0)\setminus\om} v_\bot(x_0+\e\,\cdot\,)\, \mathrm{d}x=0,
\end{equation}
which allow us to apply the Poincar\'e inequality in the following chain of estimates,
\begin{align*}
\|v_\bot\|_{L_2(\p\om_\e)}^2=\e
\|v_\bot(x_0+\e\,\cdot\,)\|_{L_2(\p\om)}^2\leqslant C\e
\|\nabla v_\bot(x_0+\e\,\cdot\,)\|_{L_2(B_{2R_1}(0)\setminus\om
)}^2 \leqslant C\e
\|\nabla v\|_{L_2(B_{2R_1\e}(x_0)\setminus\om_\e)}^2.
\end{align*}
From this inequality in combination with (\ref{3.15}) we infer that
\begin{equation}\label{3.17}
\|v\|_{L_2(\p\om_\e)}^2=\|v_\bot+\la v\ra_\om\|_{L_2(\p\om_\e)}^2 \leqslant C\e \Big(
\|\nabla v\|_{L_2(B_{2R_1\e}(x_0)\setminus\om_\e)}^2 +  |\la v\ra_\om|^2
\Big).
\end{equation}
Let us assess $\la v\ra_\om$. In the domain $B_{2R_1(0)}\setminus \om$ we consider the boundary-value problem
\begin{equation}\label{3.13}
\begin{gathered}
\D X=\frac{|\p\om|}{|B_{2R_1}(0)\setminus\om|}\quad\text{in}\;\; B_{2R_1(0)}\setminus\om,
 \\
 \frac{\p X}{\p\nu}=1\quad\text{on}\;\; \p\om,\quad \frac{\p X}{\p\nu}=0\quad\text{on}\;\; \p B_{2R_1}(0),
\end{gathered}
\end{equation}
where $\nu$ is the unit outward normal to the boundary of $B_{2R_1}(0)\setminus\om$. This problem is solvable because we have
\begin{equation*}
|\p\om|=\int\limits_{\p\om} \,\mathrm{d}s= \int\limits_{\p\om} \frac{\p X}{\p\nu}\,\mathrm{d}s=\!\!\int\limits_{B_{2R_1}(0)\setminus\om}\!\! \D X\,\mathrm{d}x= |B_{2R_1}(0)\setminus\om|.
\end{equation*}
In view of the assumed smoothness of the boundary $\p\om$ and the standard Schauder estimate, we can conclude that $X\in C^{(2+\g)}(\overline{B_{2R_1}(0)\setminus\om})$ for all $\g\in(0,1)$. A solution to problem (\ref{3.13}) is defined up to an additive constant which we fix it by the requirement
\begin{equation}\label{3.14}
\int\limits_{B_{2R_1}(0)\setminus\om} X(x)\,\mathrm{d}x=0.
\end{equation}
Combining problem (\ref{3.13}) and assumption (\ref{3.10}), we can rewrite $\la v\ra_\om$ using integration by parts,
\begin{equation}\label{3.9}
\begin{aligned}
\la v\ra_\om=&\:\frac{1}{|\p\om|}\int\limits_{B_{2R_1\e(x_0)}\setminus\om_\e} v(x)\D X \left(\frac{x-x_0}{\e}\right)\,\mathrm{d}x
\\
=&\: \frac{1}{\e|\p\om|} \int\limits_{\p\om_\e} v(x)\,\mathrm{d}s - \frac{1}{\e|\p\om|} \int\limits_{B_{2R_1\e}(x_0)\setminus\om_\e} \nabla v(x)\cdot(\nabla X)\left(\frac{x-x_0}{\e}\right)\,\mathrm{d}x
\\
=&\:- \frac{1}{\e|\p\om|} \int\limits_{B_{2R_1\e}(x_0)\setminus\om_\e} \nabla v(x)\cdot(\nabla X)\left(\frac{x-x_0}{\e}\right)\,\mathrm{d}x,
\end{aligned}
\end{equation}
and consequently, by Cauchy-Schwarz inequality we can infer that
\begin{equation}\label{3.18}
\begin{aligned}
|\la v\ra_\om|^2\leqslant &\:\frac{1}{\e^2|\p\om|^2} \|\nabla v\|_{L_2(B_{2R_1\e}(x_0)\setminus\om_\e)}^2 \left\|(\nabla X)\left(\frac{x-x_0}{\e}\right)\right\|_{L_2(B_{2R_1\e}(x_0)\setminus\om_\e)}^2
\\[.3em]
\leqslant &\: C\|\nabla v\|_{L_2(B_{2R_1\e}(x_0)\setminus\om_\e)}^2.
\end{aligned}
\end{equation}
This estimate together with (\ref{3.17}) yields inequality (\ref{3.11}).

Assume finally that $v\in\H^2(B_{2R_2}(x_0))$. Then we can replace $v$ in (\ref{3.8a}) with $\frac{\p v}{\p x_i}$, $i=1,2$, and integrate such an estimate over $B_{2R_1\e}(x_0)$. This gives
\begin{equation}\label{3.24}
\|\nabla v\|_{L_2(2B_{R_1\e}(x_0))}^2\leqslant
\|\nabla v\|_{L_2(2B_{R_1\e}(x_0))}^2 \leqslant C\e^2|\ln\e| \|v\|_{\H^2(B_{2R_2}(x_0))},
\end{equation}
which in combination with (\ref{3.17}), (\ref{3.18}) implies (\ref{3.12}) concluding thus the proof.
\end{proof}

Next we consider for any $v\in \H^1(\Pi_\e)$ the mean value over the boundary of $\om_\e$,
 \begin{equation}\label{5.1a}
\la v\ra_{\p\om_\e}:=\frac{1}{\e|\p\om|}\int\limits_{\p\om_\e} v\,ds.
\end{equation}

\begin{lemma}\label{lm4.0}
For all $\vp\in C(\p\om)$ and all $v\in \H^2(B_{2R_2}(x_0))$ the inequality
\begin{equation}\label{3.20}
\bigg|\e^{-1}\int\limits_{\p\om_\e} \vp\left(\frac{s_\e}{\e}\right)v(x)\,\mathrm{d}s- c(\vp) v(x_0)\bigg|\leqslant C\e |\ln\e|^\frac{1}{2}\|v\|_{\H^2(B_{2R_2}(x_0))},\quad c(\vp):=\int\limits_{\p\om} \vp(s)\,\mathrm{d}s,
\end{equation}
holds true, where $C$  is a constant  independent of $\e$ and $v$.
\end{lemma}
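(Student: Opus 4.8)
The plan is to reduce the weighted boundary integral to the already-established machinery of Lemmas~\ref{lm:bnd} and \ref{lm:bnd-mean}. The key observation is that $c(\vp)v(x_0)$ is precisely what one would get by replacing $\vp(s_\e/\e)$ by itself and $v(x)$ by the constant $v(x_0)$, after noting that $\e^{-1}\int_{\p\om_\e}\vp(s_\e/\e)\,\mathrm{d}s = \int_{\p\om}\vp(s)\,\mathrm{d}s = c(\vp)$ by the change of variables $s_\e=\e s$. Hence
\begin{equation*}
\e^{-1}\int\limits_{\p\om_\e}\vp\left(\frac{s_\e}{\e}\right)v(x)\,\mathrm{d}s - c(\vp)v(x_0)
= \e^{-1}\int\limits_{\p\om_\e}\vp\left(\frac{s_\e}{\e}\right)\big(v(x)-v(x_0)\big)\,\mathrm{d}s,
\end{equation*}
so the task becomes an estimate on the boundary values of $w(x):=v(x)-v(x_0)$.

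First I would bound the integral by $\e^{-1}\|\vp\|_{C(\p\om)}\,|\p\om_\e|^{1/2}\,\|w\|_{L_2(\p\om_\e)}$ via Cauchy--Schwarz on $\p\om_\e$, using $|\p\om_\e|=\e|\p\om|$, which yields a factor $\e^{-1/2}$ in front of $\|w\|_{L_2(\p\om_\e)}$. So it suffices to show $\|w\|_{L_2(\p\om_\e)}\leqslant C\e^{3/2}|\ln\e|^{1/2}\|v\|_{\H^2(B_{2R_2}(x_0))}$. Now $w\in\H^2(B_{2R_2}(x_0))$ with the same second derivatives as $v$, and I would like to apply estimate \eqref{3.12} of Lemma~\ref{lm:bnd-mean}. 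That lemma, however, is stated for functions satisfying the mean-zero condition \eqref{3.10} on $\p\om_\e$, which $w$ need not satisfy. The remedy is to split $w = (w-\la w\ra_{\p\om_\e}) + \la w\ra_{\p\om_\e}$ using the boundary mean \eqref{5.1a}: the first term has zero boundary mean, and I would estimate it by combining \eqref{3.11} with \eqref{3.24} (or directly invoke the chain \eqref{3.17}--\eqref{3.18} applied to $w$, followed by \eqref{3.24}), giving a bound of order $\e^{3/2}|\ln\e|^{1/2}\|v\|_{\H^2}$. For the constant part, $\|\la w\ra_{\p\om_\e}\|_{L_2(\p\om_\e)}^2 = \e|\p\om|\,|\la w\ra_{\p\om_\e}|^2$, so it remains to control $|\la w\ra_{\p\om_\e}|$.

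The main point — and the only genuinely delicate step — is estimating $\la w\ra_{\p\om_\e}=\la v\ra_{\p\om_\e}-v(x_0)$, i.e.\ showing the boundary average of $v$ over $\p\om_\e$ differs from $v(x_0)$ by $O(\e|\ln\e|^{1/2}\|v\|_{\H^2})$. This is an averaged trace estimate on a shrinking curve and it is exactly the kind of bound for which the logarithmic factor appears in two dimensions. I would prove it by the fundamental-theorem-of-calculus representation already used in the proof of Lemma~\ref{lm:bnd}: writing $v(x)-v(x_0)$ as a radial integral of $\nabla v$ from $x_0$ out to $x\in\p\om_\e$, integrating against the cut-off and applying Cauchy--Schwarz in the form $\big(\int_0^{R_1\e}|\nabla v(x')|\,\mathrm{d}r'\big)^2\leqslant \big(\int_0^{R_1\e}r'^{-1}\,\mathrm{d}r'\big)\big(\int_0^{R_1\e}|\nabla v(x')|^2 r'\,\mathrm{d}r'\big)$ — but the radial integral starts at $r'=0$, so $\int_0^{R_1\e}r'^{-1}\,\mathrm{d}r'$ diverges. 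The correct device, as in \eqref{3.8a}, is to integrate from the outer radius $2R_2$ inward, i.e.\ write $v(x)-v(x')$ for a nearby point and then pass to $v(x_0)$ only after an additional argument, or alternatively to use the $\H^2\hookrightarrow C^{0,\g}$ embedding together with an $L_2$-averaging to absorb the singularity at the centre. Concretely, I would estimate $|\la v\ra_{\p\om_\e}-\la v\ra_{\om_\e'}|$ and $|\la v\ra_{\om_\e'}-v(x_0)|$ separately, where $\la v\ra_{\om_\e'}$ is the solid average over a ball $B_{R_1\e}(x_0)$: the first difference is handled by a trace-type bound on the annulus $\Pi_\e$ reusing \eqref{3.3a}, the second by Cauchy--Schwarz on the ball, $|\la v\ra_{\om_\e'}-v(x_0)|\leqslant C\e^{-1}\int_{B_{R_1\e}(x_0)}|\nabla v|\,|x-x_0|^{-1}\,\mathrm{d}x + \dots$, where now the weight $|x-x_0|^{-1}$ is integrable in two dimensions and yields the clean factor $\e|\ln\e|^{1/2}$ after one more Cauchy--Schwarz against $r^{-1}$ over the annulus scale. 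Collecting the three contributions — the mean-zero part, the averaging error from $v(x_0)$, and the elementary $\e^{-1/2}$ loss from the initial Cauchy--Schwarz — gives exactly $C\e|\ln\e|^{1/2}\|v\|_{\H^2(B_{2R_2}(x_0))}$, which is \eqref{3.20}. I expect the bookkeeping of which ball or annulus each norm lives on to be the most error-prone part, but no new ideas beyond those in Lemmas~\ref{lm:bnd} and \ref{lm:bnd-mean} are needed.
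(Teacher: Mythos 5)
Your initial reduction is sound and matches the paper's in spirit: you write the quantity as $\e^{-1}\int_{\p\om_\e}\vp(s_\e/\e)\big(v-v(x_0)\big)\,ds$, peel off the mean-free part via Lemma~\ref{lm:bnd-mean}, and are left with the crux, which is the estimate
\begin{equation*}
\big|\la v\ra_{\p\om_\e}-v(x_0)\big|\leqslant C\e|\ln\e|^{\frac12}\|v\|_{\H^2(B_{2R_2}(x_0))}.
\end{equation*}
The paper arrives at exactly the same reduction, though more directly — starting from the identity~\eqref{3.30a} it avoids your preliminary Cauchy--Schwarz and does not need to lose and recover the $\e^{-1/2}$ factor. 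Up to this point your argument is fine.

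The gap is in how you propose to prove the displayed estimate. You correctly notice that the naive radial FTC bound produces $\int_0^{R_1\e}r^{-1}\,\mathrm{d}r$, which diverges, and then propose passing through a solid average and controlling $\int_{B_{R_1\e}(x_0)}|\nabla v|\,|x-x_0|^{-1}\,\mathrm{d}x$ by Cauchy--Schwarz ``against $r^{-1}$ over the annulus scale.'' This does not repair the problem: the singularity sits at the centre, not on an annulus, and the integral $\int_{B_{R_1\e}}|x-x_0|^{-2}\,\mathrm{d}x$ that would arise from Cauchy--Schwarz against $\|\nabla v\|_{L_2}$ still diverges. The obstruction is structural: in two dimensions $\H^2\not\hookrightarrow W^{1,\infty}$ and there is no Hardy inequality, so $\int_{B_{R_1\e}}|\nabla v|\,|x-x_0|^{-1}\,\mathrm{d}x$ cannot be bounded by $C\e|\ln\e|^{1/2}\|v\|_{\H^2}$ — the best one can extract from Sobolev/H\"older is $C\e^{1-\d}\|v\|_{\H^2}$ (or $C|\ln\e|^{1/2}\|v\|_{\H^2}$ without the $\e$-gain from localisation), both missing the target power. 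The missing idea is the auxiliary function $Y$ from~\eqref{3.21}: the solution of $\D Y=0$ in $\om\setminus\{0\}$ with $\p Y/\p\nu=1$ on $\p\om$ and a prescribed $\frac{|\p\om|}{2\pi}\ln$-singularity at the origin. Integrating $v\,\D Y\big((x-x_0)/\e\big)$ over $\om_\e$ and using Green's identity, the $\ln$-singularity of $Y$ reproduces $|\p\om|\,v(x_0)$, while the Neumann data $\p Y/\p\nu=1$ reproduces the boundary mean, yielding the representation
\begin{equation*}
\la v\ra_{\p\om_\e}-v(x_0)=\frac{1}{|\p\om|}\int\limits_{\p\om_\e} Y\Big(\frac{x-x_0}{\e}\Big)\frac{\p v}{\p\nu}\,\mathrm{d}s.
\end{equation*}
This converts a singular volume quantity into a boundary integral that never approaches $x_0$, and Cauchy--Schwarz together with Lemma~\ref{lm:bnd} applied to $\nabla v$ then delivers the required $\e|\ln\e|^{1/2}$ factor. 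This device (the two-dimensional analogue of the mean-value property for harmonic functions, adapted to the shape of $\om$) is what your proposal lacks; without it the estimate cannot be closed by elementary means.
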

\begin{proof}
We put
\begin{equation*}
v^\bot:=v-\la v\ra_{\p\om_\e},\quad\; \int\limits_{\p\om_\e} v_\bot\, \mathrm{d}s=0,
\end{equation*}
and note the following obvious identity,
\begin{equation}\label{3.30a}
\begin{aligned}
\e^{-1}\int\limits_{\p\om_\e} \vp\left(\frac{s_\e}{\e}\right)v(x)\, \mathrm{d}s=&\:\e^{-1}\la v\ra_{\p\om_\e}\int\limits_{\p\om_\e} \vp\left(\frac{s_\e}{\e}\right)\,\mathrm{d}s+\e^{-1}\int\limits_{\p\om_\e} \vp\left(\frac{s_\e}{\e}\right)v^\bot(x)\,\mathrm{d}s
\\
=&\:c(\vp)\la v\ra_{\p\om_\e} +\e^{-1}(v^\bot,\vp)_{L_2(\p\om_\e)}
\end{aligned}
\end{equation}
and from Lemma~\ref{lm:bnd-mean} we get
\begin{equation}\label{3.30b}
\Big|\e^{-1}(v^\bot,\vp)_{L_2(\p\om_\e)}\Big|\leqslant  C\e|\ln\e|^\frac{1}{2}\|v\|_{\H^2(B_{2R_2}(x_0))}.
\end{equation}
Let us assess the difference $\la v\ra_{\p\om_\e}-v(x_0)$. To this aim, we consider the boundary-value problem
\begin{equation}\label{3.21}
\D Y=0
\quad\text{in}\;\; \om\setminus\{0\},
 \quad\;
 \frac{\p Y}{\p\nu}=1\quad\text{on}\;\; \p\om,\quad\; Y(x)=\frac{|\p\om|}{2\pi}\ln|x-x_0|+O(1),\quad x\to x_0,
\end{equation}
where $\nu$ is the unit outward normal to the boundary of $\om$. This problem has a unique solution up to a constant which can be chose in such a way that
\begin{equation}\label{3.22}
\int\limits_{\om} Y(x)\,\mathrm{d}x=0.
\end{equation}
in view of the assumed smoothness of the boundary $\p\om$ and the standard Schauder estimate,  we have $Y\in C^{(2+\g)}(\overline{\om\setminus B_\d(0)})$ for any $\g\in(0,1)$ and all $\d>0$.

Let $v\in C^2(\om_\e)$. Using integration by parts and taking into account the indicated properties of the function $Y$ we get
\begin{equation*}
0=\int\limits_{\om_\e} v\D Y\left(\frac{x-x_0}{\e}\right)\,\mathrm{d}x=\e^{-1}\int\limits_{\p\om_\e} v(x)\,\mathrm{d}s-\int\limits_{\p\om_\e} Y\left(\frac{x-x_0}{\e}\right)\frac{\p v}{\p\nu}(x)\,\mathrm{d}s-|\p\om|v(x_0).
\end{equation*}
Since the space $C^2(\overline{\om_\e})$ is dense in $\H^2(\om_\e)$, the above identity holds for all $v\in \H^2(\om_\e)$ as well, and by Cauchy-Schwarz inequality and Lemma~\ref{lm:bnd} it implies
\begin{align*}
\big|\la v\ra_{\p\om_\e}-v(x_0)\big|=&\:\frac{1}{|\p\om|}\bigg|
\int\limits_{\p\om_\e} Y\left(\frac{x-x_0}{\e}\right)\frac{\p v}{\p\nu}(x)\,\mathrm{d}s
\bigg|
\\[.3em]
\leqslant&\: C\e^\frac{1}{2}\left\|\nabla v\right\|_{L_2(\p\om_\e)}\leqslant C\e|\ln\e|^\frac{1}{2}\|v\|_{\H^2(\om_\e)}.
\end{align*}
which together with (\ref{3.30a}), (\ref{3.30b}) yields the sought result.
\end{proof}

\section{
Convergence}\label{ss:proof}

The goal of this section is to prove Theorems~\ref{th:main}  and~\ref{th2.2}. The argument consists of two main parts. In the first we establish the self-adjointness of the operators $\Op_\e$ and $\Op_{0,\b}$, while the second part is devoted to the verification of the norm resolvent convergence and the spectral convergence.

\subsection{Self-adjointness of the operator $\Op_\e$}

We start by introducing a sesquilinear form  $\hf_\e$ in $L_2(\Om_\e)$ defined by the identity
\begin{equation}\label{5.11}
\begin{aligned}
\hf_\e(u,v):=&\:\hf_\Om\big((1-\chi_\Om)u,(1-\chi_\Om)v\big) + \hf_{\Om_0\setminus\om_\e}\big(\chi_\Om u,(1-\chi_\Om)v\big)
 \\
&\:+ \hf_{\Om_0\setminus\om_\e}\big((1-\chi_\Om)u,\chi_\Om v\big)
+ \hf_{\Om_0\setminus\om_\e}(\chi_\Om u,\chi_\Om v)-\e^{-1}(\a u,v)_{L_2(\p\om_\e)}
\end{aligned}
\end{equation}
on the domain
\begin{equation}\label{5.21}
\Dom(\hf_\e):=\Big\{u:\, (1-\chi_\Om)u\in\Dom(\hf_\Om),\ \chi_\Om u\in\H^1(\Om_0\setminus\om_\e)\Big\}
\end{equation}
It is clear that  this form is symmetric; let us check that it is associated with the operator $\Op_\e$, in other words, that we have
\begin{equation}\label{5.12}
\hf_\e(u,v)=(\Op_\e u,v)_{L_2(\Om_\e)}\quad\;\text{for all}\;\;  u\in\Dom(\Op_\e), \ v\in\Dom(\hf_\e).
\end{equation}
Indeed, since $u\in\H^2(\Om_0\setminus\om_\e)$, $v\in\H^1(\Om_0\setminus\om_\e)$, according the definition of $\Op_\e$, $\hf$ and $\chi_\Om$, we can use integration by parts to rewrite the last four terms on the right-hand side of \eqref{5.11} as follows,
\begin{equation}\label{5.13}
\begin{aligned}
 \hf_{\Om_0\setminus\om_\e}\big(&\chi_\Om u,(1-\chi_\Om)v\big)
+  \hf_{\Om_0\setminus\om_\e}\big((1-\chi_\Om)u,\chi_\Om v\big)
+  \hf_{\Om_0\setminus\om_\e}(\chi_\Om u,\chi_\Om v)-\e^{-1}(\a u,v)_{L_2(\p\om_\e)}
\\
&=\big(\Op_\Om\chi_\Om u,(1-\chi_\Om)v\big)_{L_2(\Om_0\setminus\om_\e)}
+  \big(\Op_\Om(1-\chi_\Om)u,\chi_\Om v\big)_{L_2(\Om_0\setminus\om_\e)}
+  (\Op_\Om\chi_\Om u,\chi_\Om v)_{L_2(\Om_0\setminus\om_\e)}
\\
&=\big(\Op_\Om\chi_\Om u,v\big)_{L_2(\Om_\e)}
+  \big(\Op_\Om(1-\chi_\Om)u,\chi_\Om v\big)_{L_2(\Om_\e)}.
\end{aligned}
\end{equation}
As for the remaining term, since by assumptions made about the cut-off function $\chi_\Om$ we have $(1-\chi_\Om)u\in\Dom(\Op_\Om)$ and $(1-\chi_\Om)v\in\Dom(\hf_\Om)$, we infer that
\begin{equation}\label{5.14}
\hf_\Om\big((1-\chi_\Om)u,(1-\chi_\Om)v\big)=\big(\Op_\Om(1-\chi_\Om)u,(1-\chi_\Om)v\big)_{L_2(\Om)}
=\big(\Op_\Om(1-\chi_\Om)u,(1-\chi_\Om)v\big)_{L_2(\Om_\e)
}.
\end{equation}
We also have $(1-\chi_\Om)u\in\H^2(\Om_0)$, and therefore
\begin{equation}\label{5.15}
\Op_\Om(1-\chi_\Om)u=\hat{\Op} (1-\chi_\Om)u\quad\text{on}\;\; \Om\setminus B_{R_2}(x_0).
\end{equation}
Substituting this identity together with (\ref{5.14}), (\ref{5.13}) into  definition (\ref{5.11}) we get
\begin{equation}\label{5.16}
\begin{aligned}
\hf_\e(u,v)=&\:\big(\Op_\Om(1-\chi_\Om)u,(1-\chi_\Om)v\big)_{L_2(\Om_\e)}
 + \big(\hat{\Op}\chi_\Om u,v\big)_{L_2(\Om_0\setminus\om_\e)}
\\
&+  \big(\hat{\Op}(1-\chi_\Om)u,\chi_\Om v\big)_{L_2(\Om_0\setminus\om_\e)}
\\
=&\:\big(\Op_\Om(1-\chi_\Om)u, v\big)_{L_2(\Om_\e)} -\big(\Op_\Om(1-\chi_\Om)u,\chi_\Om)v\big)_{L_2(\Om_\e)}
\\
& + \big(\hat{\Op}\chi_\Om u,v\big)_{L_2(\Om_0\setminus\om_\e)}
+  \big(\hat{\Op}(1-\chi_\Om)u,\chi_\Om v\big)_{L_2(\Om_0\setminus\om_\e)}
\\
=&\:\big(\Op_\Om(1-\chi_\Om)u, v\big)_{L_2(\Om_\e)} + \big(\hat{\Op}\chi_\Om u,v\big)_{L_2(\Om_\e)}
\end{aligned}
\end{equation}
which proves relation (\ref{5.12}).

\medskip

Our next step is to check that the form $\hf_\e$ is semibounded from below. Here we shall make use of the following two auxiliary results concerning the function $G$ introduced in Lemma~\ref{lm:G}.

\begin{lemma}\label{lm:(G,u)}
For all $u\in\Dom(\hf_\e)$ we have the identity
\begin{equation}\label{5.17}
\hf_\e(G,u)+c_1(G,u)_{L_2(\Om_\e)}=\Big(\frac{\p G}{\p\mathrm{n}} -\e^{-1}\a G,u\Big)_{L_2(\p\om_\e)}.
\end{equation}
\end{lemma}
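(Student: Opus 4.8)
The plan is to derive the identity \eqref{5.17} by carefully unwinding the definition \eqref{5.11} of the form $\hf_\e$ applied to the pair $(G,u)$, together with the fact that $G$ solves the homogeneous equation $(\hat{\Op}+c_1)G=0$ in $\Om\setminus\{x_0\}$ established in Lemma~\ref{lm:G}. The key structural observation is that $G$ is smooth away from $x_0$, and the hole $\om_\e$ (which lies inside $B_{R_1\e}(x_0)\subset B_{2R_2}(x_0)$) separates $x_0$ from the region where $\chi_\Om$ varies; in particular $G\in W_2^2(\Om_\e)$ — indeed $G\in W_2^2(\Om\setminus B_\d(x_0))$ for small $\d$ and $\om_\e\supset B_{?}$ contains $x_0$ — so there is no singularity of $G$ inside $\Om_\e$ and all integrations by parts over $\Om_\e$ or over $\Om_0\setminus\om_\e$ are legitimate, with the only boundary terms arising on $\p\om_\e$ and (vacuously, by \eqref{2.2b}) on $\p\Om$.

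The main steps I would carry out are: First, split $G=(1-\chi_\Om)G+\chi_\Om G$ and insert into \eqref{5.11}, exactly paralleling the computation \eqref{5.13}--\eqref{5.16} that established \eqref{5.12}, but now keeping track of the boundary term on $\p\om_\e$ that integration by parts produces because $G$ does not satisfy the Robin condition \eqref{2.2a} (this is the whole point of the lemma). Concretely, for the terms involving $\chi_\Om G$ one integrates by parts over $\Om_0\setminus\om_\e$; since $\chi_\Om\equiv1$ near $\om_\e$, the conormal boundary term on $\p\om_\e$ is $\big(\frac{\p G}{\p\mathrm{n}},u\big)_{L_2(\p\om_\e)}$ with the sign dictated by the convention \eqref{2.13} that $\nu$ points into $\om_\e$. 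For the term $\hf_\Om\big((1-\chi_\Om)G,(1-\chi_\Om)u\big)$ one uses that $(1-\chi_\Om)G\in\Dom(\Op_\Om)$ (it vanishes near $x_0$, so the logarithmic singularity is irrelevant, and it satisfies $\cB=0$ on $\p\Om$) to replace the form by $\big(\Op_\Om(1-\chi_\Om)G,(1-\chi_\Om)u\big)$ with no boundary contribution. Collecting all pieces, the bulk terms assemble into $\big((\hat{\Op}+c_1)G,u\big)_{L_2(\Om_\e)}$ after adding $c_1(G,u)_{L_2(\Om_\e)}$; but $(\hat{\Op}+c_1)G=0$ on $\Om_\e$, so the bulk vanishes and only the two $\p\om_\e$ boundary terms survive: $\big(\frac{\p G}{\p\mathrm{n}},u\big)_{L_2(\p\om_\e)}$ from the integration by parts and $-\e^{-1}(\a G,u)_{L_2(\p\om_\e)}$ carried along verbatim from the last term of \eqref{5.11}. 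This is precisely \eqref{5.17}.

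A couple of technical points deserve care. One must justify the integration by parts for $u$ only in $\Dom(\hf_\e)$ rather than in $\Dom(\Op_\e)$: since $G$ is smooth on a neighborhood of $\overline{\Om_0\setminus\om_\e}$ minus $x_0$, Green's formula $\hf_{\Om_0\setminus\om_\e}(w,u)=(\hat{\Op}w,u)_{L_2}-\text{(boundary terms)}$ with $w=\chi_\Om G\in W_2^2$ and $u\in W_2^1$ is the standard first Green identity, valid for $W_2^1$ test functions, so no extra regularity of $u$ is needed; the boundary term on $\p\om_\e$ makes sense as the pairing of the $H^{1/2}(\p\om_\e)$ trace of $u$ with the $H^{1/2}$-dual trace $\frac{\p G}{\p\mathrm{n}}$, and since $G\in C^1(\overline{B_\d}\setminus\{x_0\})$ this conormal derivative is in fact continuous on $\p\om_\e$. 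One should also note that the cross terms $\hf_{\Om_0\setminus\om_\e}\big(\chi_\Om G,(1-\chi_\Om)u\big)$ and $\hf_{\Om_0\setminus\om_\e}\big((1-\chi_\Om)G,\chi_\Om u\big)$ produce no $\p\om_\e$ boundary terms in the second slot's favor because $(1-\chi_\Om)$ vanishes identically near $\p\om_\e$, exactly as in \eqref{5.13}. I do not expect a genuine obstacle here; the only mild subtlety is bookkeeping the sign of the conormal term coming from the inward-pointing normal convention and confirming that the $\p\Om$ boundary terms vanish, which they do by \eqref{2.2b} and the definition of $\hf_\Om$ as the form of $\Op_\Om$ — essentially a repetition of the verification \eqref{5.13}--\eqref{5.16} already done for \eqref{5.12}, now with one surviving surface integral.
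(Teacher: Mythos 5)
Your proposal is correct and takes essentially the same route as the paper: both exploit that $(1-\chi_\Om)G\in\Dom(\Op_\Om)$, that $(\hat\Op+c_1)G=0$ on $\Om_\e$, and a single integration by parts over $\Om_0\setminus\om_\e$ (where $\chi_\Om\equiv1$ near $\p\om_\e$) to extract the conormal boundary term, which together with the $-\e^{-1}(\a G,u)_{L_2(\p\om_\e)}$ term already present in \eqref{5.11} gives \eqref{5.17}. The paper merely packages the same bookkeeping by first proving the intermediate identity \eqref{5.18} and then substituting, whereas you carry all terms along in one computation; the mathematical content is identical.
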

\begin{proof}
To begin with, we observe that $G\in\Dom(\Op_\e)$, and therefore the quantity $\hf_\e(u,G)$ is well defined. It also follows from the definition of the function $G$ and (\ref{5.15}) that
\begin{equation}\label{5.18}
\hf_\Om((1-\chi_\Om)G,(1-\chi_\Om)u)+c_1\big( G,(1-\chi_\Om)u\big)_{L_2(\Om_\e)} =- \hf_{\Om_0\setminus\om_\e}(\chi_\Om G,(1-\chi_\Om)u).
\end{equation}
Thus we have
\begin{align*}
\hf_\e(G,u)+c_1(G,u)_{L_2(\Om_\e)}=&\: \hf_{\Om_0\setminus\om_\e}\big((1-\chi_\Om)G,\chi_\Om u\big)
+  \hf_{\Om_0\setminus\om_\e}(\chi_\Om G,\chi_\Om u)-\e^{-1}(\a G,u)_{L_2(\p\om_\e)}
\\
&\:+c_1\big( G, \chi_\Om u\big)_{L_2(\Om_\e)}
\\
=&\: \hf_{\Om_0\setminus\om_\e}(G,\chi_\Om u)+c_1(G,\chi_\Om u)_{L_2(\Om_\e)}-\e^{-1}(\a G,u)_{L_2(\p\om_\e)}\,;
\end{align*}
integrating then by parts and using the equation that $G$ satisfies, we arrive at (\ref{5.17}).
\end{proof}

\begin{lemma}\label{lm:asG}
The identities
\begin{align}
& \label{5.8}
\left(\frac{\p G}{\p\mathrm{n}} -\e^{-1}\a G,G\right)_{L_2(\p\om_\e)}= K+\pi a\tr\mathrm{A}+\mathcal{O}(\ln^{-1}\e),
\\
\label{5.2}
&
\begin{aligned}
\left(\frac{\p G}{\p\mathrm{n}} -\e^{-1}\a G\right)(\e^{-1}s_\e) =&\:\e^{-1}\big(
\ln^{-1}\e\, \Phi_1(\e^{-1}s_\e) \\
&\:+\ln^{-2}\e\,\Phi_2(\e^{-1}s_\e,\e)\big)\quad\text{on}\;\;\p\om_\e
\end{aligned}
\end{align}
hold true, where where $ K$ is defined by formula (\ref{2.4a}) and
\begin{equation}\label{5.3}
\Phi_1(s):=-\a_0(s)(\ln|\mathrm{A}^{-\frac{1}{2}}\x(s)|+a)-\a_1(s),
\end{equation}

and $\Phi_2=\Phi_2(s,\e)$ is a function uniformly bounded in $\e$ and $s$.
\end{lemma}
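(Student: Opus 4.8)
The key input is the asymptotic expansion \eqref{2.7}--\eqref{3.0} of $G$ near $x_0$ together with the prescribed form \eqref{2.2c} of $\a$ and the choice \eqref{2.4} of $\a_0$. The plan is to evaluate both $\dfrac{\p G}{\p\mathrm{n}}$ and $\e^{-1}\a G$ pointwise on $\p\om_\e$, where $x=x_0+\e\x(s)$ with $s\in[0,|\p\om|]$, and track the powers of $\e$ and $\ln\e$ that arise, showing that the $\e^{-1}$-order term carries a factor $\ln^{-1}\e$ (its coefficient being $\Phi_1$) and that all remaining contributions are of order $\e^{-1}\ln^{-2}\e$ with uniformly bounded coefficients (collected into $\Phi_2$). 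First I would plug $x-x_0=\e\x(s)$ into \eqref{2.7}: since $|\mathrm{A}^{-\frac12}\e\x(s)|=\e|\mathrm{A}^{-\frac12}\x(s)|$, one gets $G=\ln\e+\ln|\mathrm{A}^{-\frac12}\x(s)|+a+O(\e\ln\e)$ on $\p\om_\e$, so $G=\ln\e\,\big(1+\ln^{-1}\e(\ln|\mathrm{A}^{-\frac12}\x(s)|+a)+O(\e)\big)$. Next, from \eqref{2.2a}--\eqref{2.2c}, $\e^{-1}\a(\e^{-1}s_\e,\ln^{-1}\e)=\dfrac{1}{\e\ln\e}\big(\a_0(s)+\ln^{-1}\e\,\a_1(s)\big)$, so
\begin{equation*}
\e^{-1}\a G=\frac{1}{\e}\Big(\a_0(s)+\ln^{-1}\e\,\a_1(s)\Big)\Big(1+\ln^{-1}\e(\ln|\mathrm{A}^{-\tfrac12}\x(s)|+a)+O(\e)\Big).
\end{equation*}
Expanding the product, the leading $\e^{-1}$-term is $\e^{-1}\a_0(s)$, and at order $\e^{-1}\ln^{-1}\e$ one picks up exactly $\e^{-1}\ln^{-1}\e\big(\a_0(s)(\ln|\mathrm{A}^{-\frac12}\x(s)|+a)+\a_1(s)\big)=-\e^{-1}\ln^{-1}\e\,\Phi_1(s)$ by definition \eqref{5.3}; the rest is $\e^{-1}\ln^{-2}\e$ times a bounded function of $s$ plus, using $\e\ln^{-1}\e=O(\ln^{-2}\e)$ for small $\e$, further contributions of the same order.

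The conormal derivative is handled next. I would compute $\dfrac{\p G}{\p\mathrm{n}}$ using \eqref{2.13}: since $\nabla_x\ln|\mathrm{A}^{-\frac12}(x-x_0)|=\dfrac{\mathrm{A}^{-1}(x-x_0)}{|\mathrm{A}^{-\frac12}(x-x_0)|^2}$, the principal part of $\dfrac{\p G}{\p\mathrm{n}}$ on $\p\om_\e$ comes from the term $\sum_{i,j}A_{ij}(x_0)\nu_i\p_{x_j}\ln|\mathrm{A}^{-\frac12}(x-x_0)|$, which equals $\dfrac{\nu\cdot(x-x_0)}{|\mathrm{A}^{-\frac12}(x-x_0)|^2}$ after using $\mathrm{A}\mathrm{A}^{-1}=\mathrm{E}$; with $x-x_0=\e\x(s)$ this is $\dfrac{1}{\e}\cdot\dfrac{\nu\cdot\x(s)}{|\mathrm{A}^{-\frac12}\x(s)|^2}$. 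But the normal $\nu$ on $\p\om_\e$ pointing into $\om_\e$ is $\e$-independent when expressed in the rescaled variable, and by the choice \eqref{2.4}, $\a_0(s)=\dfrac{\nu\cdot\mathrm{A}^{\frac12}\x(s)}{|\mathrm{A}^{-\frac12}\x(s)|^2}$; after matching the geometric normals (the inward normal to $\om_\e$ versus the outward normal to $\om$, and the change of metric through $\mathrm{A}^{1/2}$) this principal $\e^{-1}$-term of $\dfrac{\p G}{\p\mathrm{n}}$ cancels exactly the $\e^{-1}\a_0(s)$ term from $\e^{-1}\a G$. This cancellation is the crucial point: it is precisely why $\a_0$ was chosen as in \eqref{2.4}. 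The subleading terms in $\dfrac{\p G}{\p\mathrm{n}}$ come from the $O(|x-x_0|\ln|x-x_0|)$ remainder in \eqref{2.7} (more precisely from the $r(\cdots\ln r+P)$ term in \eqref{3.0}) and from the $A_j$-contribution in \eqref{2.13}; differentiating the $r\ln r$ term produces $O(\ln r)=O(\ln\e)$, which after the $\e^{-1}$ prefactor inherent in polar differentiation is $O(\e^{-1}\cdot\e\ln\e)=O(\ln\e)$, hence bounded — actually $o(\e^{-1}\ln^{-1}\e)$ — and the lower-order $A_j$ and polynomial terms are $O(1)$. All of these are therefore absorbed into the $\e^{-1}\ln^{-2}\e\,\Phi_2$ remainder, establishing \eqref{5.2}.

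Finally, for \eqref{5.8} I would integrate \eqref{5.2} against $G$ over $\p\om_\e$. Using $ds_\e=\e\,ds$ and $G(\e^{-1}s_\e)=\ln\e+\ln|\mathrm{A}^{-\frac12}\x(s)|+a+O(\e\ln\e)$,
\begin{equation*}
\Big(\tfrac{\p G}{\p\mathrm{n}}-\e^{-1}\a G,G\Big)_{L_2(\p\om_\e)}
=\int\limits_{\p\om}\big(\ln^{-1}\e\,\Phi_1(s)+\ln^{-2}\e\,\Phi_2(s,\e)\big)\big(\ln\e+\ln|\mathrm{A}^{-\tfrac12}\x(s)|+a+O(\e\ln\e)\big)\,\mathrm{d}s.
\end{equation*}
The dominant contribution is $\int_{\p\om}\Phi_1(s)\,\mathrm{d}s=-\int_{\p\om}\big(\a_0(s)(\ln|\mathrm{A}^{-\frac12}\x(s)|+a)+\a_1(s)\big)\,\mathrm{d}s$; recalling \eqref{2.4a} and Lemma~\ref{lm3.0} (namely $\int_{\p\om}\a_0\,\mathrm{d}s=-\pi\tr\mathrm{A}$), this equals $K+\pi a\tr\mathrm{A}$. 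All the remaining terms — $\Phi_1$ against the $O(1)$ part of $G$, and the $\Phi_2$ terms against all of $G$ — carry at least one extra factor $\ln^{-1}\e$, so they are $\mathcal{O}(\ln^{-1}\e)$, which is \eqref{5.8}. The main obstacle, and the only genuinely delicate bookkeeping, is establishing the exact cancellation of the $\e^{-1}$-order terms in Step~2: one must be careful about the orientation of the normal (the inward normal on $\p\om_\e$ versus the outward normal on $\p\om$) and about how the anisotropic principal symbol $\mathrm{A}$ converts the Euclidean normal $\nu$ appearing in \eqref{2.4} into the conormal derivative \eqref{2.13}; everything else is routine Taylor expansion in $\e$ and $\ln^{-1}\e$.
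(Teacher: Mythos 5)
Your overall approach matches the paper's: substitute $x-x_0=\e\x(s)$ into the expansion \eqref{3.0}, expand $\e^{-1}\a G$ and $\frac{\p G}{\p\mathrm{n}}$ in powers of $\e$ and $\ln^{-1}\e$, observe that the $\e^{-1}$ orders cancel, read off $\Phi_1$ at order $\e^{-1}\ln^{-1}\e$, and then integrate against $G$ and invoke Lemma~\ref{lm3.0} for \eqref{5.8}. That is exactly what the paper's (one-sentence) proof alludes to, and your bookkeeping of the $\ln^{-2}\e$-remainders, including the observation that $\e\ln\e=O(\ln^{-2}\e)$ and that the $r\ln r$ term of \eqref{3.0} contributes $O(\ln\e)=O(\e^{-1}\ln^{-2}\e)$, is in order.

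The gap is precisely the step you yourself label as ``the only genuinely delicate bookkeeping'', and you do not in fact carry it out. Your computation of the principal part of the conormal derivative gives, correctly,
\[
\sum_{i,j}A_{ij}(x_0)\,\nu_i\,\p_{x_j}\ln|\mathrm{A}^{-\frac12}(x-x_0)|
=\frac{\nu\cdot(x-x_0)}{|\mathrm{A}^{-\frac12}(x-x_0)|^2}
=\frac{1}{\e}\,\frac{\nu\cdot\x(s)}{|\mathrm{A}^{-\frac12}\x(s)|^2}\quad\text{on }\p\om_\e,
\]
whereas \eqref{2.4} defines $\a_0(s)=\dfrac{\nu\cdot\mathrm{A}^{\frac12}\x(s)}{|\mathrm{A}^{-\frac12}\x(s)|^2}$. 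For the $\e^{-1}$ orders of $\frac{\p G}{\p\mathrm{n}}$ and $\e^{-1}\a G$ to cancel one would need $\nu\cdot\x(s)=\nu\cdot\mathrm{A}^{\frac12}\x(s)$ pointwise on $\p\om$, which is not an identity for a generic anisotropic $\mathrm{A}$. The appeal to ``matching the geometric normals (the inward normal to $\om_\e$ versus the outward normal to $\om$, and the change of metric through $\mathrm{A}^{1/2}$)'' does not close this: the vector $\nu$ in \eqref{2.13} and the vector $\nu$ in \eqref{2.4} live on the same boundary in the same ambient coordinates, there is no metric conversion available that transforms $\nu\cdot\x$ into $\nu\cdot\mathrm{A}^{\frac12}\x$, and an orientation flip would only change the sign. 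You must either verify the exact pointwise cancellation, or, if it fails as written, flag the discrepancy between your computed conormal derivative and \eqref{2.4} (and, consistently, the integrand in \eqref{5.5}) and work with whichever normalisation makes the cancellation hold. Since \eqref{5.2} rests entirely on this cancellation, and \eqref{5.8} in turn rests on \eqref{5.2} together with \eqref{5.5}, the proof is not complete until this one computation is done explicitly rather than asserted.

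A minor secondary remark: you should also state why the $A_j$-term of the conormal derivative \eqref{2.13} and the non-constant parts of $A_{ij}(x)$ (as opposed to $A_{ij}(x_0)$) contribute only $O(1)$ on $\p\om_\e$; this is routine since $G$ and $\nabla G$ are $O(\ln\e)$ and $O(\e^{-1})$ there while $A_{ij}(x)-A_{ij}(x_0)=O(\e)$, but it should be said, since these are the terms absorbed into $\Phi_2$ alongside the higher-order parts of \eqref{3.0}.
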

\begin{proof}
The stated asymptotic expansions (\ref{5.2}), (\ref{5.3}) can be easily confirmed by straightforward calculations using relations (\ref{3.0}). These three formul{\ae} in combination with the identities (\ref{5.5}), (\ref{2.4}) yield in turn (\ref{5.8}) which concludes the proof.
\end{proof}

Given an arbitrary $u\in \H^1(\Om_\e)$ we denote
\begin{equation}\label{5.1b}
u^\bot:=u- \la u\ra_G G,\quad\; \la u\ra_G:=\frac{\la u\ra_{\p\om_\e}}{\la G\ra_{\p\om_\e}},
\end{equation}
recalling that the averaging $\la\cdot\ra_{\p\om_\e}$ was introduced in (\ref{5.1a}).  Then in view of Lemma~\ref{lm:(G,u)} we have
\begin{equation}\label{5.19}
\begin{aligned}
\hf_\e(u,u)+c_1\|u\|_{L_2(\Om_\e)}^2=&\hf_\e\big(u^\bot+\la u\ra_G G,u\big)+c_1\big(u^\bot+\la u\ra_G G,u\big)_{L_2(\Om_\e)}
\\
=&\:\hf_\e\big(u^\bot,u\big)+c_1(u^\bot,u)_{L_2(\Om_\e)} + \la u\ra_G \left(\frac{\p G}{\p\mathrm{n}} -\e^{-1}\a G,u\right)_{L_2(\p\om_\e)}
\\
=&\:\hf_\e\big[u^\bot]+c_1\|u^\bot\|_{L_2(\Om_\e)}^2 +|\la u\ra_G|^2 \left(\frac{\p G}{\p\mathrm{n}} -\e^{-1}\a G, G \right)_{L_2(\p\om_\e)}
 \\
 &\:+2\RE \la u\ra_G \left(\frac{\p G}{\p\mathrm{n}} -\e^{-1}\a G,u^\bot  \right)_{L_2(\p\om_\e)}.
\end{aligned}
\end{equation}
In view of the asymptotics (\ref{5.2})
we have
\begin{equation*}
\left(\frac{\p G}{\p\mathrm{n}} -\e^{-1}\a G,u^\bot\right)_{L_2(\p\om_\e)} =\e^{-1}\ln^{-1}\e\left(\Phi_1+\ln^{-1}\Phi_2,u^\bot    \right)_{L_2(\p\om_\e)}
\end{equation*}
and therefore, by virtue of Lemma~\ref{lm:bnd-mean},
\begin{equation}\label{5.20}
\left|\left(\frac{\p G}{\p\mathrm{n}} -\e^{-1}\a G,u^\bot\right)_{L_2(\p\om_\e)}\right|\leqslant C|\ln\e|^{-1}\|u^\bot\|_{\H^1(B_{2R_2}(x_0)\setminus\om_\e)},
\end{equation}
where $C$ is a constant independent of $\e$ and $u$.

To proceed we have to analyze the term $\hf_\e(u^\bot,u^\bot)$ in (\ref{5.19}). The estimate (\ref{2.0a}) implies
\begin{equation}\label{5.24}
\begin{aligned}
\hf_\Om\big[(1-\chi_\Om)u^\bot] + c_1\|u^\bot\|_{L_2(\Om\setminus\Om_0)}^2 \geqslant&\: c_2\|u^\bot\|_{\H^1(\Om\setminus\Om_0)}^2 + \hf_{\Om_0}\big[(1-\chi_\Om)u^\bot\big]
\\[.3em]
=&\:c_2\|u^\bot\|_{\H^1(\Om\setminus\Om_0)}^2 + \hf_{\Om_0\setminus\om_\e}\big[(1-\chi_\Om)u^\bot\big].
\end{aligned}
\end{equation}
At the same time, it is straightforward to confirm
that
\begin{equation}\label{5.24a}
\begin{aligned}
\hf_{\Om_0\setminus\om_\e}\big[(1-\chi_\Om)u^\bot\big] &+ \hf_{\Om_0\setminus\om_\e}\big((1-\chi_\Om)u^\bot,\chi_\Om u^\bot
\big) \\
&+ \hf_{\Om_0\setminus\om_\e}\big(\chi_\Om u^\bot,(1-\chi_\Om)u^\bot\big)
+\hf_{\Om_0\setminus\om_\e}[ \chi_\Om u^\bot]=\hf_{\Om_0\setminus\om_\e}[u^\bot],
\end{aligned}
\end{equation}
hence by the definition of the form $\hf_\e$ and estimates (\ref{5.24}), (\ref{2.0c}) we get
\begin{equation}\label{5.22}
\begin{aligned}
\hf_\e[u^\bot] + c_1\|u^\bot\|_{L_2(\Om_\e)}^2 \geqslant &\: c_2\|u^\bot\|_{\H^1(\Om\setminus\Om_0)} + \hf_{\Om_0\setminus\om_\e}[u^\bot]
 \\[.3em]
 &\:+ c_1\|u^\bot\|_{L_2(\Om_0\setminus\om_\e)}-\e^{-1}(\a u^\bot,u^\bot)_{L_2(\p\om_\e)}
 \\[.3em]
 \geqslant &\: c_2 \|u^\bot\|_{\H^1(\Om_\e)}-\e^{-1}(\a u^\bot,u^\bot)_{L_2(\p\om_\e)}.
\end{aligned}
\end{equation}
By Lemma~\ref{lm:bnd-mean} and the definition of $\a$ by \eqref{2.2c} and \eqref{2.4} we also have
\begin{equation}\label{5.22a}
\big|\e^{-1}(\a u^\bot,u^\bot)_{L_2(\p\om_\e)}\big|\leqslant C|\ln\e|^{-1} \|\nabla u^\bot\|_{L_2(\Om_0\setminus\om_\e)}^2,
\end{equation}
where $C$ is a fixed constant independent of $\e$ and $u$, hence in view of (\ref{5.22}) we finally obtain
\begin{equation}\label{5.23}
\hf_\e[u^\bot] + c_1\|u^\bot\|_{L_2(\Om_\e)}^2\geqslant \big(c_2-C|\ln\e|^{-1}\big) \|u^\bot\|_{L_2(\Om_\e)}^2.
\end{equation}
This estimate together with (\ref{5.20}), (\ref{5.19}), and (\ref{5.8}) implies that
\begin{equation}\label{5.25}
\hf_\e(u,u)+c_1\|u\|_{L_2(\Om_\e)}^2 \geqslant (c_2-C|\ln\e|^{-1} ) \|u^\bot\|_{\H^1(\Om_\e)}^2+ (K+\pi a\tr\mathrm{A} -C|\ln\e|^{-1}) |\la u\ra_G|^2,
\end{equation}
where $C$ is again a fixed constant independent of $\e$ and $u$. Furthermore, using Cauchy-Schwarz inequality it is easy to check that
\begin{equation}\label{5.26}
\begin{aligned}
\|u\|_{L_2(\Om_\e)}^2=&\:\|u^\bot\|_{L_2(\Om_\e)}^2+2\RE\la u\ra_G(G,u^\bot)_{L_2(\Om_\e)}+|\la u\ra_G|^2\|G\|_{L_2(\Om_\e)}^2
\\[.3em]
\geqslant &\:-\eta\|u^\bot\|_{L_2(\Om_\e)}^2+\frac{\eta\|G\|_{L_2(\Om_\e)}^2}{1+\eta}|\la u\ra_G|^2
\end{aligned}
\end{equation}
holds for an arbitrary $\eta\in(0,1)$, and this identity in turn implies
\begin{equation}\label{5.27}
\begin{aligned}
(c_2&-C|\ln\e|^{-1} ) \|u^\bot\|_{\H^1(\Om_\e)}^2 + (K+\pi a\tr\mathrm{A} -C|\ln\e|^{-1}) |\la u\ra_G|^2+c_3\|u\|_{L_2(\Om_\e)}^2 \\[.3em]
&\: \geqslant (c_2-C|\ln\e|^{-1}) \|\nabla u^\bot\|_{L_2(\Om_\e)}^2
+ (c_2-C|\ln\e|^{-1} -c_3\eta)\| u^\bot\|_{L_2(\Om_\e)}^2 \\[.3em] &\: + \bigg(K+\pi a\tr\mathrm{A}-C|\ln\e|^{-1} + c_3\frac{\eta\|G\|_{L_2(\Om_\e)}^2}{1+\eta}
\bigg)|\la u\ra_G|^2
\end{aligned}
\end{equation}
for any $c_3>0$. Having in mind that $\|G\|_{L_2(\Om_\e)}^2=\|G\|_{L_2(\Om)}^2+o(1)$, we choose $c_3$ and $\eta$ in such a way that $c_3\eta$ is less than $c_2$ and $\eta$ is small enough. In view of (\ref{2.4a}) we can achieve that
\begin{equation}\label{5.27a}
c_2-C|\ln\e|^{-1} -c_3\eta\geqslant c_4,\qquad  K+\pi a\tr\mathrm{A}-C|\ln\e|^{-1} + c_3\frac{\eta\|G\|_{L_2(\Om_\e)}^2}{1+\eta}\geqslant c_4
\end{equation}
holds for all sufficiently small $\e$, where $c_4$ is a fixed positive constant independent of $\e$, and $c_3$ is independent of $\e$ as well. By means of (\ref{5.25}), (\ref{5.27}) we then have
\begin{equation}\label{5.28}
\hf_\e[u]+c_5\|u\|_{L_2(\Om_\e)}^2 \geqslant c_6 \big( \|u^\bot\|_{\H^1(\Om_\e)}^2+   |\la u\ra_G|^2\big),
\end{equation}
where $c_5$ and $c_6$ are fixed constants independent of $\e$ and $u$.

We also observe that we if we replace estimate (\ref{5.24}) by the identity
\begin{align*}
\hf_\Om\big[(1-\chi_\Om)u^\bot] + c_1\|u^\bot\|_{L_2(\Om\setminus\Om_0)}^2=&\: \hf_\Om\big[(1-\chi_\Om)u^\bot] + c_1\|u^\bot\|_{L_2(\Om\setminus\Om_0)}^2
\\
&\:- \hf_{\Om_0}\big[(1-\chi_\Om)u^\bot\big]+\hf_{\Om_0}\big[(1-\chi_\Om)u^\bot\big]
\end{align*}
and proceed as in (\ref{5.24a})--(\ref{5.27a}), 
 taking in addition (\ref{5.28}) into account, we get one more estimate, namely
\begin{equation}\label{5.28a}
\begin{aligned}
\hf_\e[u]+c_5\|u\|_{L_2(\Om_\e)}^2 \geqslant &\: \hf_\Om\big[(1-\chi_\Om)u^\bot] + c_1\|u^\bot\|_{L_2(\Om\setminus\Om_0)}^2- \hf_{\Om_0}\big[(1-\chi_\Om)u^\bot\big]
\\
&\:
+c_6 \big( \|u^\bot\|_{\H^1(\Om_\e)}^2+   |\la u\ra_G|^2\big),
\end{aligned}
\end{equation}

Finally, let us show that the form $\hf_\e$ is closed. We recall that the domain $\Dom(\hf_\Om)$ is by assumption a subspace in $\H^1(\Om)$ and take an arbitrary sequence $u_n\in\Dom(\hf_\e)$ such that
\begin{equation}\label{5.29}
\|u_n-u\|_{L_2(\Om_\e)}\to0,\quad\; \hf_\e[u_n-u_m]\to0\quad\;\text{as}\;\; n,m\to\infty
\end{equation}
for some $u\in L_2(\Om_\e)$. In view of (\ref{5.28}), this immediately implies that
\begin{equation}\label{5.30}
\|u_n^\bot-u_m^\bot\|_{\H^1(\Om_\e)} \to0,\quad\;  \la u_n-u_m\ra_G\to 0 \quad\;\text{as}\;\; n,m\to\infty.
\end{equation}
and taking (\ref{5.1b}) and (\ref{5.29}) into account, we then conclude that
\begin{equation}\label{5.31}
\|u_n-u_m\|_{\H^1(\Om_\e)} \to0 \quad\,\text{as}\;\; n,m\to\infty.
\end{equation}
This $u_n$ converges in $\H^1(\Om_\e)$ and due to the first claim in (\ref{5.29}), the limiting function is $u$ which means that $u\in\H^1(\Om_\e)$. By definition (\ref{5.11}) of the form $\hf_\e$ together with (\ref{5.29}), (\ref{5.31}) this implies
\begin{equation*}
\hf_\Om[(1-\chi_\Om)(u_n-u_m)]\to0,\quad\; \|(1-\chi_\Om)u_n-(1-\chi_\Om)u\|_{L_2(\Om_\e)}\to0 \quad\;\text{as}\;\; n,m\to\infty.
\end{equation*}
Since the form $\hf_\Om$ is closed, it follows that $(1-\chi_\Om)u_n$ converges to $(1-\chi_\Om)u$ with respect to the norm in the subspace $\Dom(\hf_\Om)$ of the Sobolev space $\H^1(\Om)$. Consquently, $(1-\chi_\Om)u\in \Dom(\hf_\Om)$, and in view of (\ref{5.21}) we may conclude that $u\in\Dom(\hf_\Om)$, and also $\hf_\e[u_n-u]\to0$ as $n\to\infty$.  This means that the form $\hf_\e$ is closed.

This brings us to the desired conclusion: the operator $\Op_\e$ is associated with a closed symmetric sesquilinear form semibounded from below, and therefore it is self-adjoint.

\subsection{Self-adjointness of the operator $\Op_{0,\b}$}\label{ss:SA0}

By definition, the domain of the adjoint operator $\Op_{0,\b}^*$ consists of all $v\in L_2(\Om)$ such that there exists a function $g\in L_2(\Om)$ obeying the identity
\begin{equation}\label{5.32}
(\Op_{0,\b}u,v)_{L_2(\Om)}=(u,g)_{L_2(\Om)}\quad\text{for all}\;\; u\in\Dom(\Op_{0,\b}),\quad\; \Op_{0,\b}^*v=g.
\end{equation}
Since $u=u_0+(\b-a)^{-1} u_0(x_0)G$, $u_0\in\Dom(\Op_\Om)$, we can rewrite the above identity as
\begin{equation*}
\big(\Op_\Om u_0-c_1(\b-a)^{-1} u_0(x_0)G,v\big)_{L_2(\Om)}=(u_0,g)_{L_2(\Om)} + (\b-a)^{-1} u_0(x_0)(G,g)_{L_2(\Om)}
\end{equation*}
and hence,
\begin{equation}\label{5.34}
(\Op_\Om u_0,v)_{L_2(\Om)}-(\b-a)^{-1} u_0(x_0) (G,c_1 v+g)_{L_2(\Om)} = (u_0,g)_{L_2(\Om)}.
\end{equation}
Proceeding as in the proof of Lemma~\ref{lm:G}, cf. (\ref{3.27})--(\ref{3.28}),
it is straightforward to check that
\begin{align*}
(f,G)_{L_2(\Om^\d)}&+(\b-a)^{-1}(\l+c_1) u_0(x_0) \|G\|_{L_2(\Om^\d)}^2
\\
&+(\l+c_1)(u_0,G)_{L_2(\Om^\d)} = -\!\!\int\limits_{\{x:\, |y|=\d\}} \left(\overline{G}\,\frac{\p u_0}{\p\mathrm{n}}-u_0 \frac{\p\overline{G}}{\p\mathrm{n}}\right)\,\mathrm{d}s.
\end{align*}
Passing to the limit as $\d\to+0$ in the above identity, the left-hand side converges to the analogous expression with the scalar product referring to $L_2(\Om)$. In view of Lemmata~\ref{lm:bnd},~\ref{lm4.0} with $\om_\e$ replaced by $\{x:\, |y|<\d\}$ together with the asymptotics~(\ref{3.0}) and identity (\ref{5.5}), we also get
\begin{equation*}
\lim\limits_{\d\to+0}\int\limits_{\{x:\, |y|=\d\}}  \overline{G}\,\frac{\p u_0}{\p\mathrm{n}}\,\mathrm{d}s=0,\quad\; \lim\limits_{\d\to+0} \int\limits_{\{x:\, |y|=\d\}}  u_0 \frac{\p \overline{G}}{\p\mathrm{n}}\,\mathrm{d}s =-\pi v(x_0)\tr\mathrm{A}.
\end{equation*}
Recalling the definition of $u_0$, the limit $\d\to+0$ thus yields
\begin{equation}\label{5.33}
-\pi u_0(x_0)\tr\mathrm{A}=((\Op_\Om+c_1)u_0,G)_{L_2(\Om)}
\end{equation}
which allows us to rewrite (\ref{5.34}) as
\begin{equation*}
(\Op_\Om u_0,v)_{L_2(\Om)}-(\b-a)^{-1}\kappa ((\Op_\Om+c_1)u_0,G)_{L_2(\Om)}=(u_0,g)_{L_2(\Om)},\quad\; \kappa:=-\frac{(G,c_1v+g)_{L_2(\Om)}}{\pi\tr\mathrm{A}},
\end{equation*}
or equivalently as
\begin{equation*}
(\Op_\Om u_0,v-(\b-a)^{-1}\overline{\kappa} G)_{L_2(\Om)}=(u_0,g+(\b-a)^{-1} c_1\overline{\kappa} G)_{L_2(\Om)}.
\end{equation*}
Since the operator $\Op_\Om$ is self-adjoint, the above identity implies that
\begin{equation}\label{5.35}
w:=v-(\b-a)^{-1} \overline{k} G\in \Dom(\Op_\Om),\qquad \Op_{0,\b}^*w=g+(\b-a)^{-1} c_1\overline{\kappa} G.
\end{equation}
Using then the identity (\ref{5.33}) with $u_0$ replaced by $w$, we get
\begin{equation*}
-\pi w(x_0)\tr\mathrm{A}=  ((\Op_\Om+c_1)w,G)_{L_2(\Om)}=(g+c_1 v,G)_{L_2(\Om)}=-\pi\overline{\kappa}\tr \mathrm{A},
\end{equation*}
and therefore, by virtue of (\ref{5.35}),
\begin{equation*}
v=w+(\b-a)^{-1} w(x_0)G,\quad\; \Op_{0,\b}^* w=g+c_1 (\b-a)^{-1} w(x_0)G,\quad\; w\in\Dom(\Op_\Om),
\end{equation*}
which means that $\Op_{0,\b}^*=\Op_{0,\b}$.

\subsection{Resolvent convergence}\label{ss:ResConv}

Since both the operators $\Op_\e$ and $\Op_{0,\b}$ are self-adjoint, their resolvents are well defined for $\l$ away from the real axis, $\IM\l\ne0$. We choose an arbitrary $f\in L_2(\Om)$ and denote $u_0:=(\Op_{0,\b}-\l)^{-1}f$, $u_\e:=(\Op_\e-\l)^{-1}f$, where in the latter definition the resolvent is applied to the restriction of the function $f$ to $\Om_\e$; with an abuse of notation we keep the same symbol for it. We put $v_\e:=u_\e-u_0$. This function obviously belongs to $W_2^2(\Om_\e)$ and solves the boundary-value problem
\begin{equation*}
(\hat{\Op}-\l)v_\e=0\quad\text{in}\;\;\Om_\e, \quad\; \cB v_\e=0\quad\text{on}\;\; \p\Om,\quad\; \frac{\p v_\e}{\p\mathrm{n}}=\e^{-1}\a v_\e+g_\e\quad\text{on}\;\; \p\om_\e,
\end{equation*}
where
\begin{equation}\label{4.2}
g_\e:=\left(\frac{\p\ }{\p\mathrm{n}}-\e^{-1}\a \right)u_0.
\end{equation}
The corresponding integral equation reads
\begin{equation}\label{4.3}
\hf_\e[v_\e]-\l\|v_\e\|_{L_2(\Om_\e)}^2=-(g_\e,v_\e)_{L_2(\p\om_\e)}.
\end{equation}

The next step is to estimate the right-hand side in (\ref{4.3}). Since $u_0\in \Dom(\Op_{0,\b})$ according (\ref{2.15}), it can be represented as
\begin{equation}\label{4.5}
u_0(x)=v_0(x)+(\b-a)^{-1} v_0(x_0) G(x),\quad\; v_0\in \H^2(\Om),
\end{equation}
and
\begin{equation*}
f=(\Op_{0,\b}-\l)u_0=(\Op_\Om-\l)v_0- (\b-a)^{-1}(\l+c_1) v_0(x_0)G.
\end{equation*}

\begin{lemma}\label{lm4.1}
The inequality
\begin{equation*}
\|v_0\|_{\H^1(\Om)} + \|v_0\|_{\H^2(B_{2R_2}(x_0))} + |v_0(x_0)|\leqslant C\|f\|_{L_2(\Om)}
\end{equation*}
holds, where $C$ is a constant  independent of $f$ but in general depending on $\l$.
\end{lemma}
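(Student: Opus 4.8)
The plan is to extract the bound directly from the resolvent identity for $\Op_{0,\b}$ and the coercivity of the form $\hf_\Om$. The key observation is that $u_0:=(\Op_{0,\b}-\l)^{-1}f$ decomposes via \eqref{4.5} as $u_0=v_0+(\b-a)^{-1}v_0(x_0)G$ with $v_0\in\Dom(\Op_\Om)\subset\H^1(\Om)$, and from \eqref{2.10} the source relation is $f=(\Op_\Om-\l)v_0-(\b-a)^{-1}(\l+c_1)v_0(x_0)G$. Thus $v_0$ satisfies
\begin{equation*}
(\Op_\Om-\l)v_0 = f+(\b-a)^{-1}(\l+c_1)v_0(x_0)G\,.
\end{equation*}
The right-hand side involves the unknown scalar $v_0(x_0)$, so the first genuine step is to pin down $v_0(x_0)$ in terms of $f$. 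For this I would use the identity \eqref{5.33}, namely $-\pi u_0(x_0)\tr\mathrm{A}=((\Op_\Om+c_1)u_0,G)_{L_2(\Om)}$ (which was verified in Section~\ref{ss:SA0}), applied with $u_0$ there equal to our $v_0$: since $v_0\in\Dom(\Op_\Om)$ this gives $-\pi v_0(x_0)\tr\mathrm{A}=((\Op_\Om+c_1)v_0,G)_{L_2(\Om)}$. Writing $(\Op_\Om+c_1)v_0=(\Op_\Om-\l)v_0+(\l+c_1)v_0=f+(\b-a)^{-1}(\l+c_1)v_0(x_0)G+(\l+c_1)v_0$, taking the inner product with $G$ and solving the resulting linear equation for $v_0(x_0)$ yields
\begin{equation*}
v_0(x_0)\Bigl(-\pi\tr\mathrm{A}-(\b-a)^{-1}(\l+c_1)\|G\|_{L_2(\Om)}^2\Bigr)=(f,G)_{L_2(\Om)}+(\l+c_1)(v_0,G)_{L_2(\Om)}\,.
\end{equation*}
Here one needs that the coefficient of $v_0(x_0)$ on the left is nonzero for $\IM\l\ne0$; this holds because $\pi\tr\mathrm{A}>0$ is real while, for $\b\ne a$ and $\IM\l\ne0$, the term $(\b-a)^{-1}(\l+c_1)\|G\|^2$ has nonzero imaginary part unless $\|G\|=0$, which is impossible. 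Hence $|v_0(x_0)|\le C(\|f\|_{L_2(\Om)}+\|v_0\|_{L_2(\Om)})$.

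Next I would establish the $\H^1$-bound on $v_0$. Testing $(\Op_\Om-\l)v_0=F$ with $v_0$, where $F:=f+(\b-a)^{-1}(\l+c_1)v_0(x_0)G$, gives $\hf_\Om[v_0]-\l\|v_0\|_{L_2(\Om)}^2=(F,v_0)_{L_2(\Om)}$. Adding $c_1\|v_0\|^2$ and taking real and imaginary parts, the semiboundedness of $\hf_\Om$ (which gives $\hf_\Om[v_0]\ge(c_2-c_1)\|v_0\|^2$ from \eqref{2.0a}) combined with $|\IM\l|\,\|v_0\|^2\le|(F,v_0)|$ yields the standard resolvent bound $\|v_0\|_{L_2(\Om)}\le|\IM\l|^{-1}\|F\|_{L_2(\Om)}$, and then the coercivity \eqref{2.0a} upgrades this to $\|v_0\|_{\H^1(\Om)}\le C\|F\|_{L_2(\Om)}$. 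Since $\|F\|_{L_2(\Om)}\le\|f\|_{L_2(\Om)}+C|v_0(x_0)|\,\|G\|_{L_2(\Om)}$, combining with the bound on $|v_0(x_0)|$ from the previous paragraph gives a closed system: one gets $|v_0(x_0)|\le C\|f\|_{L_2(\Om)}+C\|v_0\|_{L_2(\Om)}$ and $\|v_0\|_{L_2(\Om)}\le C\|f\|_{L_2(\Om)}+C|v_0(x_0)|$, which by the sharper $|v_0(x_0)|\le C(\|f\|+\|v_0\|_{L_2})$ — actually one should go back and note that the coefficient equation controls $|v_0(x_0)|$ by $\|f\|+\|v_0\|_{L_2}$ with a constant that does not blow up, so substituting the $L_2$-bound and absorbing — yields $\|v_0\|_{\H^1(\Om)}+|v_0(x_0)|\le C\|f\|_{L_2(\Om)}$.

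Finally, the $\H^2$-bound on $B_{2R_2}(x_0)$ follows from interior elliptic regularity (standard $W_2^2$ a priori estimates for the operator $\hat\Op$, whose coefficients are smooth by assumption, cf.\ \cite{GT83}): on the slightly larger ball $B_{R_2}(x_0)$, picking $\Om_0$ so that $B_{2R_2}(x_0)$ is compactly contained, $\Op_\Om v_0=\hat\Op v_0$ near $x_0$ and $\hat\Op v_0=F+\l v_0$ there, so
\begin{equation*}
\|v_0\|_{\H^2(B_{2R_2}(x_0))}\le C\bigl(\|F+\l v_0\|_{L_2(B_{R_2'}(x_0))}+\|v_0\|_{L_2(B_{R_2'}(x_0))}\bigr)\le C\|f\|_{L_2(\Om)}\,,
\end{equation*}
where $R_2'$ is chosen with $R_2<R_2'$ and $B_{2R_2'}(x_0)\subset\Om_0$, using that $G$ is smooth away from $x_0$ and $x_0\notin B_{2R_2}(x_0)$ is false — so here one must be slightly careful: $x_0$ does lie in $B_{2R_2}(x_0)$, so one cannot use smoothness of $G$ there. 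The correct route is to apply elliptic regularity to $v_0$ itself (not to $u_0$), and $v_0=u_0-(\b-a)^{-1}v_0(x_0)G\in\H^2(\Om_0)$ by the assumption following \eqref{2.0a} that $\Dom(\Op_\Om)$ restricted to $\Om_0$ lies in $\H^2(\Om_0)$; then the a priori estimate in the interior ball gives the claimed bound with the $L_2$-norm of $f$ on the right via the already-established bounds. \textbf{The main obstacle} is the bootstrap for $|v_0(x_0)|$: one must ensure the linear coefficient $-\pi\tr\mathrm{A}-(\b-a)^{-1}(\l+c_1)\|G\|_{L_2(\Om)}^2$ stays bounded away from zero uniformly enough to close the estimate, and then carefully disentangle the coupled inequalities between $|v_0(x_0)|$, $\|v_0\|_{L_2(\Om)}$ and $\|F\|_{L_2(\Om)}$ so that all constants depend only on $\l$ and not on $f$.
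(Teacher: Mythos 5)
The key algebraic identity you derive for $v_0(x_0)$ is essentially the same as the paper's (your coefficient equation and the paper's identity $-\pi v_0(x_0)\tr\mathrm{A}=(f,G)_{L_2(\Om)}+(\l+c_1)(u_0,G)_{L_2(\Om)}$ are equivalent after substituting $(u_0,G)=(v_0,G)+(\b-a)^{-1}v_0(x_0)\|G\|^2$). The genuine gap is exactly where you flag it: your bootstrap does \emph{not} close. You end up with the coupled one-sided inequalities
\begin{equation*}
|v_0(x_0)|\leqslant C_1\big(\|f\|_{L_2(\Om)}+\|v_0\|_{L_2(\Om)}\big),\qquad
\|v_0\|_{L_2(\Om)}\leqslant |\IM\l|^{-1}\big(\|f\|_{L_2(\Om)}+C_4|v_0(x_0)|\big),
\end{equation*}
and substituting one into the other yields $\|v_0\|_{L_2(\Om)}\big(1-|\IM\l|^{-1}C_1C_4\big)\leqslant C\|f\|_{L_2(\Om)}$. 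This only produces a bound when $|\IM\l|^{-1}C_1C_4<1$, which is not true in general: for a fixed $\l$ with small $|\IM\l|$ the prefactor is large, and nothing in your argument forces $C_1C_4$ to be small. Two inequalities of this shape are compatible with $|v_0(x_0)|$ and $\|v_0\|_{L_2}$ both being arbitrarily large, so they cannot by themselves establish the lemma.

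The paper avoids this entirely by invoking the self-adjointness of $\Op_{0,\b}$ itself (proved in Section~\ref{ss:SA0}): since $\l\notin\spec(\Op_{0,\b})$, one has $\|u_0\|_{L_2(\Om)}\leqslant\dist(\l,\spec(\Op_{0,\b}))^{-1}\|f\|_{L_2(\Om)}$, and the identity is then used in the form $-\pi v_0(x_0)\tr\mathrm{A}=(f,G)+(\l+c_1)(u_0,G)$. Here the right-hand side involves $\|u_0\|_{L_2}$, already controlled, not $\|v_0\|_{L_2}$ — so $|v_0(x_0)|\leqslant C\|f\|_{L_2}$ follows in one stroke with no bootstrap. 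After that, $\|v_0\|_{L_2}$ is bounded by expanding $\|u_0\|^2$ and absorbing the cross-term via Young's inequality, and your last step (bound the right-hand side $F$ of $(\Op_\Om-\l)v_0=F$, then apply the resolvent bound for $\Op_\Om$ and interior $W_2^2$ regularity) is sound. So the fix is simply to use the resolvent estimate for $\Op_{0,\b}$ on $u_0$ up front, as the paper does, rather than trying to close a fixed-point iteration between $|v_0(x_0)|$ and $\|v_0\|_{L_2}$.
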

\begin{proof}
Throughout the proof the symbol $C$ stands again for various inessential constants independent of $v_0$. Since the operator $\Op_{0,\b}$ is self-adjoint and $\l\notin\spec(\Op_{0,\b})$, we immediately get
\begin{equation}\label{4.8}
\begin{aligned}
\|v_0\|_{L_2(\Om)}^2 &+ 2\RE\overline{(\b-a)^{-1} v_0(x_0)} (v_0,G)_{L_2(\Om)}
 \\
 &+ |v_0(x_0)|^2 \|G\|_{L_2(\Om)}^2=\|u_0\|_{L_2(\Om)}^2\leqslant \frac{\|f\|_{L_2(\Om)}^2 }{\dist(\l,\spec(\Op_{0,\b}))^2} .
\end{aligned}
\end{equation}
We observe that the function $v_0$ solves the operator equation
\begin{equation}
(\Op_\Om-\l)v_0=f+(\b-a)^{-1}(\l+c_1) v(x_0)G\quad\text{in}\;\;\Om\label{4.18}.
\end{equation}
Repeating the steps that led  us to identity (\ref{5.33}), we can confirm that
\begin{equation*}
(f,G)_{L_2(\Om)}+(\l+c_1)(u_0,G)_{L_2(\Om)} = -\pi v(x_0)\tr\mathrm{A}.
\end{equation*}
In view of (\ref{4.8}), this implies that
\begin{equation}\label{4.16}
|v_0(x_0)|\leqslant C\|f\|_{L_2(\Om)},
\end{equation}
and by Cauchy-Schwarz inequality we then find
\begin{equation*}
\Big|\overline{(\b-a)^{-1} v_0(x_0)} (v_0,G)_{L_2(\Om)}  \Big|\leqslant |(\b-a)^{-1}||v_0(x_0)|\|v_0\|_{L_2(\Om)}\|G\|_{L_2(\Om)} \leqslant \frac{1}{2}\|v_0\|_{L_2(\Om)}^2+C\|f\|_{L_2(\Om)}^2.
\end{equation*}
This estimate in combination with (\ref{4.8}) yields
\begin{equation}\label{4.19}
\|v_0\|_{L_2(\Om)}^2\leqslant C\|f\|_{L_2(\Om)}^2,
\end{equation}
thus by (\ref{4.16}) we infer that the right-hand side in (\ref{4.18}) can be estimated as
\begin{equation*}
\|f+(\b-a)^{-1}(\l+c_1) v(x_0)G\|_{L_2(\Om)}\leqslant C\|f\|_{L_2(\Om)}.
\end{equation*}
It follows then from (\ref{4.18}) that
\begin{equation*}
\|v_0\|_{\H^1(\Om)} + \|v_0\|_{\H^2(B_{2R_2}(x_0))} +\|v_0\|_{\H^2(\Om)}\leqslant C\|f\|_{L_2(\Om)}
\end{equation*}
and his estimate together with (\ref{4.19}) completes the proof.
\end{proof}

\medskip

Recalling (\ref{5.1a}), (\ref{5.1b}), we represent functions $v_\e$ and $v_0$ as
\begin{equation}\label{4.22}
 v_{0,\bot}(x):=v_0(x)-\la v_0\ra_{\p\om_\e}, \quad\; v_\e=v_\e^\bot+\la v_\e\ra_G G,\quad\;
 \int\limits_{\p\om_\e} v_{0,\bot} \,\mathrm{d}s=\int\limits_{\p\om} v_\e^\bot\,\mathrm{d}s=0.
\end{equation}
Furthermore, in view of (\ref{4.2}) and (\ref{4.5}) the function $g_\e$ has the following representation,
\begin{equation}\label{4.17}
\begin{gathered}
g_\e=g_{\e,1}+g_{\e,2}+g_{\e,3},\quad\;
g_{\e,1}:=\frac{\p v_0}{\p \mathrm{n}},
\\g_{\e,2}:=\e^{-1} (v_0 -v_0(x_0))\a,\quad\;  g_{\e,3}:=v_0(x_0)(\b-a)^{-1}\left( \frac{\p G}{\p\nu}-\e^{-1}\a G\right)-\e^{-1}\a v_0(x_0).
\end{gathered}
\end{equation}
We have
\begin{equation*}
(g_{\e,1},v_\e)_{L_2(\p\om_\e)}=(g_{\e,1},v_\e^\bot)_{L_2(\p\om_\e)}
+\overline{
\la v_\e\ra_G}\,(g_{\e,1},G)_{L_2(\p\om_\e)},
\end{equation*}
and therefore from Lemmata~\ref{lm4.1},~\ref{lm:bnd}, and~\ref{lm:bnd-mean} we infer that
\begin{equation}\label{4.21}
\begin{aligned}
\big|(g_{\e,1},v_\e)_{L_2(\p\om_\e)}\big|\leqslant &\: C\e|\ln\e| \|v\|_{\H^2(B_{2R_2}(x_0))}\big(\|v_\e^\bot\|_{\H^1(\Om_\e)}+|\la v_\e\ra_{G}|\big)
\\[.3em]
\leqslant &\: C\e|\ln\e|\|f\|_{L_2(\Om)}\big(\|v_\e^\bot\|_{\H^1(\Om_\e)}+|\la v_\e\ra_{G}|\big).
\end{aligned}
\end{equation}
As before, the symbol $C$ stands for inessential constants independent of $\e$, $f$, $v_0$, $v_\e$, and $x$.

In view of the decomposition (\ref{4.22}), the function $g_{\e,2}$ can be represented as
\begin{equation*}
g_{\e,2}=\e^{-1} (v_{0,\bot}+ g_{\e,4})\a,\quad\; g_{\e,4}:=\la v_0\ra_{\p\om_\e}-v_0(x_0),
\end{equation*}
and using Lemmata~\ref{lm:bnd},~\ref{lm:bnd-mean},~\ref{lm4.0}, and~\ref{lm4.1}, we obtain
\begin{equation}\label{4.25}
\begin{aligned}
\big|(g_{\e,2},v_\e)_{L_2(\p\om_\e)}\big|\leqslant & C\e^{-1}|\ln\e|^{-1} \big(\|v_{0,\bot}\|_{L_2(\p\om_\e)}+\e^\frac{1}{2}|g_{\e,4}|\big) \|v_\e\|_{L_2(\p\om_\e)}
\\[.3em]
\leqslant & C\e^\frac{1}{2} \|f\|_{L_2(\Om)}\big(\|v_\e^\bot\|_{\H^1(\Om_\e)}+|\la v_\e\ra_{G}|\big).
\end{aligned}
\end{equation}
Let us proceed to assessment of the scalar product $(g_{\e,3},v_\e)_{L_2(\p\om_\e)}$. Using representation (\ref{4.17}) together with (\ref{5.8}), (\ref{2.4}), we get
\begin{equation}\label{4.38}
(g_{\e,3},v_\e)_{L_2(\p\om_\e)} = (g_{\e,3},v_\e^\bot)_{L_2(\p\om_\e)}+\la v_\e\ra_{G} (g_{\e,3},G)_{L_2(\p\om_\e)}
\end{equation}
and
\begin{equation*}
(g_{\e,3},G)_{L_2(\p\om_\e)}=v_0(x_0) \big((\b-a)^{-1}(K+\pi a\tr\mathrm{A})+\pi\tr \mathrm{A}+\mathcal{O}(\ln^{-1}\e)\big).
\end{equation*}
In view of (\ref{2.5}) and Lemma~\ref{lm4.1} we thus have
\begin{equation}\label{4.31}
\big|\la v_\e\ra_{G}(g_{\e,3},G)_{L_2(\p\om_\e)}\big|\leqslant C|\ln\e|^{-1}\|f\|_{L_2(\p\Om)}\big|\la v_\e\ra_{G}\big|.
\end{equation}
Next we use identities (\ref{5.2}), (\ref{2.2c}) and Lemmata~\ref{lm:bnd-mean},~\ref{lm4.1} to estimate the first term on the right-hand side of (\ref{4.38}),
\begin{align*}
\big|(g_{\e,3},v_\e^\bot)_{L_2(\p\om_\e)}\big| = &\:\e^{-1} |\ln\e|^{-1} \Big|\big(\Phi_1-\a_0+\ln^{-1}\e\,(\Phi_2-\a_1),v_\e^\bot \big)_{L_2(\p\om_\e)}\Big|
\\
\leqslant &\: C|\ln\e|^{-1}\|f\|_{L_2(\Om)} \big(\|v_\e^\bot\|_{\H^1(\Om_\e)}+|\la v_\e\ra_{G}|\big).
\end{align*}
This estimate and (\ref{4.31}) lead us to a bound for $(g_{\e,3},v_\e)_{L_2(\p\om_\e)}$,
\begin{equation*}
\big|(g_{\e,3},v_\e)_{L_2(\p\om_\e)}\big| \leqslant C|\ln\e|^{-1} \|f\|_{L_2(\Om)} \big(\|v_\e^\bot\|_{\H^1(\Om_\e)}+|\la v_\e\ra_{G}|\big).
\end{equation*}
then (\ref{4.25}), (\ref{4.21}), and (\ref{4.17}) imply the final estimate for the right-hand side in (\ref{4.3}),
\begin{equation}\label{4.33}
\big|(g_\e,v_\e)_{L_2(\p\om_\e)}\big|\leqslant C
|\ln\e|^{-1}
\|f\|_{L_2(\Om_\e)}\big(\|v_\e^\bot\|_{\H^1(\Om_\e)}+|\la v_\e\ra_{G}|\big).
\end{equation}

Now we consider separately the imaginary and real part of the both sides of equation (\ref{4.3}), then using (\ref{5.28}) we arrive at
\begin{equation}\label{4.33a}
\begin{aligned}
&\|v_\e\|_{L_2(\Om_\e)}^2\leqslant C
|\ln\e|^{-1}
\|f\|_{L_2(\Om_\e)}\big(\|v_\e^\bot\|_{\H^1(\Om_\e)}+|\la v_\e\ra_{G}|\big),
\\[.3em]
& \|v_\e^\bot\|_{\H^1(\Om_\e)}^2+|\la v_\e\ra_{G}|^2
 \leqslant C
|\ln\e|^{-1}
\|f\|_{L_2(\Om_\e)}\big(\|v_\e^\bot\|_{\H^1(\Om_\e)}+|\la v_\e\ra_{G}|\big),
\end{aligned}
\end{equation}
where the second estimate implies
\begin{equation}\label{4.40}
\|v_\e^\bot\|_{\H^1(\Om_\e)} +|\la v_\e\ra_{G}|
\leqslant C
|\ln\e|^{-1}
\|f\|_{L_2(\Om_\e)}.
\end{equation}
In this way we get the inequality
\begin{equation}\label{4.41}
\|v_\e\|_{L_2(\Om_\e)} \leqslant  \|v_\e^\bot\|_{L_2(\Om_\e)} + |\la v_\e\ra_{G}| \|G\|_{L_2(\Om_\e)}\leqslant C
|\ln\e|^{-1}
\|f\|_{L_2(\Om_\e)}
\end{equation}
which proves the convergence (\ref{cnv1}).

As for the second claim of Theorem~\ref{th:main}, using asymptotics (\ref{3.0}) it is easy to check that
\begin{equation}\label{4.45}
\|\nabla G\|_{L_2(\Om_0\setminus\om_\e)}\leqslant  C|\ln\e|^\frac{1}{2},\quad\; \|\nabla (1-\chi_\Om) G\|_{L_2(\Om_\e)} + \hf_\Om[(1-\chi_\Om)G]\leqslant C.
\end{equation}
and consequently, by virtue of (\ref{4.40}),
\begin{equation}\label{4.46}
\|\nabla v_\e\|_{L_2(\Om_\e)}^2 \leqslant 2\|\nabla v_\e^\bot\|_{L_2(\Om_\e)}^2 + 2|\la v_\e\ra_G|^2 \|\nabla G\|_{L_2(\Om_\e)}^2 \leqslant C|\ln\e|^{-1}\|f\|_{L_2(\Om)}^2.
\end{equation}
This inequality in combination with (\ref{4.41}) proves (\ref{cnv2}).

Let us pass to the last claim. It follows from the estimate (\ref{5.28a}) and identity (\ref{4.3}) that
\begin{equation*}
\hf_\Om[(1-\chi_\Om)v_\e^\bot]+c_5\|v_\e^\bot\|_{L_2(\Om_\e)}^2\leqslant C \|v_\e\|_{L_2} +\big|(g_\e,v_\e)_{L_2(\Om_\e)}\big|  + \hf_{\Om_0}[(1-\chi_\Om)v_\e^\bot].
\end{equation*}
Using now (\ref{4.33}), (\ref{4.40}), and (\ref{4.41}), we obtain
\begin{equation*}
\hf_\Om[(1-\chi_\Om)v_\e^\bot]\leqslant C|\ln\e|^{-2}\|f\|_{L_2(\Om)}^2
\end{equation*}
and by (\ref{4.45}) and (\ref{4.41}) this implies that
\begin{equation*}
\hf_\Om[(1-\chi_\Om)v_\e]+\|(1-\chi_\Om)v_\e\|_{L_2(\Om_\e)}^2\leqslant C|\ln\e|^{-2}\|f\|_{L_2(\Om)}^2.
\end{equation*}
Together with (\ref{4.46}) and (\ref{4.41}), the above inequality leads us to (\ref{cnv3}).

Let us finally demonstrate that the estimates (\ref{cnv1}),  (\ref{cnv2}), and (\ref{cnv3}) are order sharp. To this aim, it is sufficient to consider a suitable particular case, for instance,
\begin{equation*}
\Om=\mathbb{R}^2,\qquad x_0=0,\qquad \hat{\Op}=-\D,\qquad c_1=1,\qquad \mathrm{A}=\mathrm{E},\qquad
\Om_0:=B_1(0).
\end{equation*}
 The function $G$ can be then found explicitly,
\begin{equation*}
G(x)=\frac{\pi}{2\iu} H_0(\iu|x|),
\end{equation*}
where $H_0$ is the Hankel function of the first kind. For the `hole' we choose the disc of radius $b$, that is, $\om:=B_b(0)$. Then according to (\ref{2.4}), the function $\a_0$ is constant, $\a_0=-b^{-1}$, on the hole parimeter, and we choose $\a_1$ being a constant as well. The asymptotics of $G$ is well known,
\begin{equation*}
G(x)=\ln|x|+a+\mathcal{O}\big(|x|^2\ln|x|\big),\quad |x|\to0,\quad\, a:=\g-\ln 2, \quad\; \g:=\lim\limits_{n\to+\infty} \bigg(\sum\limits_{m=1}^{n}\frac{1}{m}-\ln n\bigg).
\end{equation*}
The constants $ K$ and $\b$ defined in (\ref{2.4a}), (\ref{2.5}) are in this case the following,
\begin{equation*}
 K=2\pi\big(\ln b-b\a_1\big),\quad\; \b=b\a_1-\ln b.
\end{equation*}
We also observe that in terms of the standard definition of the point interaction, the above operator coincides with $-\D_{\z,x_0}$ introduced in \cite[Thm.~I.5.3]{AGHH}, referring to the coupling constant $2\pi\z=-b$.  The hole radius $b$ is positive by definition, so in this case we are able determine explicitly the range of the coupling strengths for which our approximation works.

Let $v_0\in C_0^\infty(\mathbb{R}^2)$ be a non-vanishing radially symmetric function such that $v_0(0)\ne0$. Then the function $u_0(x):=v_0(x)+(\b-a)^{-1} v_0(0) G(x)$ is in the domain of the operator $\Op_{0,\b}$ and
\begin{equation*}
(\Op_{0,\b}-\l)u_0=f:=-\D v_0 +(\b-a)^{-1}(1-\l)v_0(x_0)G
\end{equation*}
for each $\l=k^2$ with $\IM k>0$, $\IM\l\ne0$. It follows that the function $v_\e:=(\Op_\e-\l)^{-1}f-u_0$ solves the boundary-value problem
\begin{align*}
&(-\D-\l)v_\e=0\quad\text{in}\;\; \mathbb{R}^2\setminus B_{b\e}(0), \quad\; -\frac{\p v_\e}{\p |x|}+ \frac{1}{\e\ln\e} \left(\frac{1}{b}-\frac{\a_1}{\ln\e}\right)v_\e=h_\e\quad\text{on}\;\; \p B_{b\e}(0),
\\
&h_\e:=\left(\frac{\p u_0}{\p |x|}-\frac{1}{\e\ln\e} \left(\frac{1}{b}-\frac{\a_1}{\ln\e}\right)u_0\right)\Bigg|_{|x|=b\e}
\end{align*}
and can be found explicitly:
\begin{equation*}
v_\e(x)=\frac{h_\e}{c_\e} H_0(\iu k |x|), \qquad c_\e:=\left(-\frac{\p\ }{\p r}+\frac{1}{\e\ln\e} \left(\frac{1}{b}-\frac{\a_1}{\ln\e}\right)\right)H_0(\iu k|x|)\Bigg|_{r=b\e}.
\end{equation*}
With the explicit formul{\ae} for all the considered functions in hand, we can find the asymptotics of the quotient $h_\e/c_\e$,
\begin{equation*}
\frac{h_\e}{c_\e}=\frac{\iu\pi}{2} \frac{v_0(x_0)\a_1^2 b^2(\b-a)^{-2}}{(1-(\b-a)^{-1}\ln k)\ln\e}+ \mathcal{O}(\ln^{-2}\e),
\end{equation*}
which means that
\begin{equation*}
\|v_\e\|_{L_2(\Om_\e)}\geqslant C|\ln\e|^{-1},\quad\;
\|\nabla v_\e\|_{L_2(\Om_\e)}\geqslant C|\ln\e|^{-\frac{1}{2}},
\end{equation*}
where $C$ is a positive constant independent of $\e$. Consequently, the estimates (\ref{cnv1}),  (\ref{cnv2}), and (\ref{cnv3}) are sharp up to a multiplicative constant. This concludes the proof of Theorem~\ref{th:main}.

\subsection{Spectral convergence}

In this subsection we prove  Theorem~\ref{th2.2}.  We employ the ideas proposed in the proof of a similar statement in \cite{PRSE}, see Theorem~2.5 and Section~7  in the cited work.

The proof is based on standard results on the convergence of  spectra and associated spectral projectors  with respect to the  resolvent norm, see, for instance, \cite[Thm. VIII.23]{RS}. However, we can not apply directly this theorem since our operators $\Op_\e$ and $\Op_{0,\b}$ act in different Hilbert spaces, $L_2(\Om_\e)$ and $L_2(\Om_0)$. To overcome this obstacle, we introduce an auxiliary multiplication operator in $L_2(\om_\e)$ acting as $\Op_{\om_\e} u:=\e^{-1} u$. This simple operator is self-adjoint, its spectrum consists of the only eigenvalue $\l=\e$ of an infinite multiplicity and the resolvent satisfies the relation
\begin{equation}\label{6.1}
\|(\Op_{\om_\e}-\l)^{-1}\|_{L_2(\Om_\e)\to L_2(\Om_\e)}=\frac{\e}{|1-\e\l|},\quad\; \l\ne\e.
\end{equation}
In view of  Lemma~\ref{lm4.1} and estimate~(\ref{3.3a}) in Lemma~\ref{lm:bnd} we have an obvious estimate,
\begin{equation}\label{6.2}
\|(\Op_{0,\b}-\l)^{-1}\|_{L_2(\Om)\to L_2(\om_\e)}\leqslant C\e|\ln\e|^{\frac{1}{2}},
\end{equation}
 valid for all $\l$ with a non-zero imaginary part,  where $C$ is a constant independent of $\e$ but depending on $\l$.

We regard the space $L_2(\Om)$ as the direct sum $L_2(\Om)=L_2(\Om_\e)\oplus L_2(\om_\e)$ and  consider the direct sum $\tilde{\Op}_\e:=\Op_\e\oplus \Op_{\om_\e}$. Then estimates~(\ref{cnv1}) and (\ref{6.1}),~(\ref{6.2}) imply that
\begin{equation}\label{6.3}
\begin{aligned}
\|(&\tilde{\Op}_\e-\l)^{-1}-(\Op_{0,\b}-\l)^{-1}\|_{L_2(\Om)\to L_2(\Om)}\leqslant
\|(\Op_\e-\l)^{-1}-(\Op_{0,\b}-\l)^{-1}\|_{L_2(\Om)\to L_2(\Om_\e)}
\\[.3em]
&+ \|(\Op_{\om_\e}-\l)^{-1}\|_{L_2(\om_\e)\to L_2(\om_\e)} + \|(\Op_{0,\b}-\l)^{-1}\|_{L_2(\Om)\to L_2(\om_\e)}
\leqslant  C|\ln\e|^{-1}
\end{aligned}
\end{equation}
for $\IM\l\ne0$, where $C$ is a constant independent of $\e$ but depending on $\IM\l$. Now we  apply  Theorem~VIII.23 from \cite{RS}  to conclude that the spectrum of the operator $\tilde{\Op}_\e$ converges to that of the operator $\Op_{0,\b}$. Since the spectrum  of $\Op_{\om_\e}$ consists of the only point $\l=\e^{-1}$, which escapes to the infinity as $\e\to+0$, and
\begin{equation}\label{6.4}
\spec(\tilde{\Op}_\e)=\spec(\Op_\e)\cup\{\e^{-1}\},
\end{equation}
we obtain the stated convergence of the spectrum of the operator $\Op_\e$. The  convergence of the spectral projections  corresponding to any interval $[\vr_1,\vr_2]$ with $\vr_1$ and $\vr_2$ from the resolvent set of $\Op_{0,\b}$ also  follows from Theorem~VIII.23 in \cite{RS}.

Let us next prove inclusion (\ref{2.17}). We choose an arbitrary but fixed segment $Q:=[\vr_1,\vr_2]$ and consider  $\l\in\mathbb{C}$ such $\l=t+\iu  |\ln\e|^{-1}$ with  $t\in Q\cap\spec(\Op_{0,\b})$; the set of such $\l$ is denoted by $Q_\e$. For $\l\in Q_\e$ we recall the well-known formul{\ae}
\begin{align*}
&\big\|(\Op_{0,\b}-\l)^{-1}\big\|_{L_2(\Om)\to L_2(\Om)}=\frac{1}{\dist(\l,\spec(\Op_{0,\b}))},
\\
&\big\|(\tilde{\Op}_\e-\l)^{-1}\big\|_{L_2(\Om)\to L_2(\Om)}=\frac{1}{\dist(\l,\spec(\tilde{\Op}_\e))}=\frac{1}{\dist(\l,\spec(\Op_\e))},
\end{align*}
where in the latter identity we have also employed   (\ref{6.4}). These  relations  and estimate (\ref{6.3}) imply  that
\begin{equation*}
\left|\frac{1}{\dist(\l,\spec(\Op_\e))} -\frac{1}{\dist(\l,\spec(\Op_{0,\b}))}\right|\leqslant C|\ln\e|^{-1},
\end{equation*}
and hence, for $\l\in Q_\e$,
\begin{equation*}
\frac{1}{\dist(\l,\spec(\Op_\e))}\geqslant \frac{1}{\dist(\l,\spec(\Op_{0,\b}))}-C|\ln\e|^{-1}
\geqslant  |\ln\e| -C|\ln\e|^{-1}\geqslant \frac{|\ln\e|}{2},
\end{equation*}
 in other words, 
\begin{equation*}
\dist(\l,\spec(\Op_\e)) \leqslant 2 |\ln\e|^{-1}\quad\text{as}\quad \l\in Q_\e.
\end{equation*}
Hence the distance from the set $\spec(\Op_\e)\cap Q$ to the set $\spec(\Op_{0,b})\cap Q$ does not exceed $2|\ln\e|^{-1}$ and this proves inclusion (\ref{2.17}).

Finally, let $\l_0$ be  an isolated eigenvalue of the operator $\Op_{0,\b}$  of multiplicity $n$ and $\cP_{0,\b}$ be the projection on the associated eigenspace in $L_2(\Om)$. Then the above proven facts imply immediately that there exist exactly $n$ isolated eigenvalues of the operator $\Op_\e$ converging to $\l_0$, naturally with the multiplicities taken into account; we refer to them as to perturbed eigenvalues. By $\cP_\e$ we denote the total projection associated with them. Inclusion (\ref{2.17}) ensures that the distance from the  perturbed eigenvalues to $\l_0$ is estimated by $C|\ln\e|^{-1}$ with some constant $C$ independent of $\e$. We fix $\d>0$ such that the ball $B_\d(\l_0)$ in the complex plane contains no other points of spectra of $\Op_\e$ and $\Op_{0,\b}$ except for $\l_0$ and the  perturbed eigenvalues. Then we know that
\begin{equation*}
\cP_\e=\frac{1}{2\pi\iu}\int\limits_{\p B_\d(\l_0)} (\tilde{\Op}_\e-\l)^{-1}\di\l=\frac{1}{2\pi\iu}\int\limits_{\p B_\d(\l_0)} (\Op_\e-\l)^{-1}\di\l,\quad\;
\cP_{0,\b}=\frac{1}{2\pi\iu}\int\limits_{\p B_\d(\l_0)} (\Op_{0,\b}-\l)^{-1}\di\l,
\end{equation*}
and  consequently,
\begin{equation}\label{6.5}
\cP_\e-\cP_0=\frac{1}{2\pi\iu}\int\limits_{\p B_\d(\l_0)}
\big((\Op_\e-\l)^{-1}-(\Op_{0,\b}-\l)^{-1}\big)\di\l.
\end{equation}
Since the contour $\p B_\d(\l_0)$ is separated from the spectra of both operators $\Op_{0,\b}$ and $\Op_\e$, estimates (\ref{cnv1}), (\ref{cnv2}), (\ref{cnv3}) remain true also for $\l\in B_\d(\l_0)$. Indeed, one can reproduce literally the  argumentation in Section~\ref{ss:ResConv} because  the fact that $\IM\l$ is non-zero was employed only in Lemma~\ref{lm4.1} and in (\ref{4.33a}); both this lemma and  the  inequalities obviously remain true in our case. Now the  desired estimates for the spectral projections follow  from  identity (\ref{6.5}) and  estimates (\ref{cnv1}), (\ref{cnv2}), (\ref{cnv3}). This completes the proof of Theorem~\ref{th2.2}.

\subsection*{Acknowledgements}

The work of P.E. was supported by the European Union within the project CZ.02.1.01/0.0/0.0/16 019/0000778.



\begin{thebibliography}{999}

\bibitem{AGHH}
S.~Albeverio, F.~Gesztesy, R.~H\o egh-Krohn, H.~Holden: \emph{Solvable Models in Quantum Mechanics}, 2nd edition,  AMS Chelsea Publishing, Providence, R.I., 2005.

\bibitem{AN00}
S.~Albeverio, L.~Nizhnik: Approximation of general zero-range potentials, \emph{Ukrainian Math. J.} \textbf{52} (2000), 582--589.

\bibitem{BF61}
F.A.~Berezin, L.D.~Faddeev: A remark on Schr\"odinger's equation with a singular potential, \emph{Sov. Math. Doklady} \textbf{2} (1961), 372--375.



\bibitem{PRSE} D. Borisov, G. Cardone, T. Durante: Homogenization and uniform resolvent convergence for elliptic operators in a strip perforated along a curve, \emph{Proc. Roy. Soc. Edinburgh. Sec. A Math.} \textbf{146} (2016), 1115--1158.

\bibitem{ZAMP} D. Borisov, R. Bunoiu, G. Cardone:
 Waveguide with non-periodically alternating Dirichlet and Robin conditions: homogenization and asymptotics, \emph{Z. Angew. Math. Phys.} \textbf{64} (2013), 439--472.


\bibitem{JDE} D. Borisov, G. Cardone, L. Faella, C. Perugia: Uniform resolvent convergence for a strip with fast oscillating boundary, \emph{J. Diff. Equat.} \textbf{255} 2013, 4378--4402.
\bibitem{BM2018}
D.I.~Borisov, A.I.~Mukhametrakhimova: On norm resolvent convergence for elliptic operators in multi-dimensional domains with small holes, \emph{J. Math. Sci.} \textbf{232} (2018), 283--298.

\bibitem{IMM12}
D.~Borisov: On a $\mathcal{PT}$-symmetric waveguide with a pair of small holes,  \emph{Proc. Steklov Inst. Math.} \textbf{281} (2013), 5--21.

\bibitem{ENZ01}
P.~Exner, H.~Neidhardt, V.A.~Zagrebnov: Potential approximations to $\delta'$: an inverse Klauder phenomenon with norm-resolvent convergence, \emph{Commun. Math. Phys.} \textbf{224} (2001), 593--612.

\bibitem{Fe36}
E.~Fermi: Sul moto dei neutroni nelle sostanze idrogenate, \emph{Ricerca Scientifica} \textbf{7} (1936), 13--52; English translation in \emph{E.~Fermi Collected papers, vol.~I}, University of Chicago Press 1962; pp.~980--1016.

\bibitem{GT83}
D.~Gilbarg, N.~Trudinger: \emph{Elliptic Partial Differential Equations of Second Order}, Springer, New York 1983.

\bibitem{Il}
A.M. Il'in: \emph{Matching of Asymptotic Expansions of Solutions of Boundary Value Problems}, Amer. Math. Soc., Providence, R.I., 1992.

\bibitem{MNP1984}
V.G.~Maz'ya, S.A.~Nazarov, B.A.~Plamenevskii: \emph{Asymptotic Theory of Elliptic Boundary Value Problems in Singularly Perturbed Domains}, vol.~I, II, Birkh\"auser, Basel 2000.

\bibitem{RS} M. Reed, B. Simon: \emph{Methods of Modern Mathematical Physics. V. 1. Functional Analysis,} Academic Press, New York (1972).

\end{thebibliography}
\end{document}